\documentclass[lettersize,journal]{IEEEtran}
\usepackage{amsmath,amsfonts}
\usepackage{amssymb}
\newtheorem{theorem}{Theorem}
\usepackage{algorithm}
\usepackage{algpseudocode}
\usepackage{array}
\usepackage{graphicx}
\usepackage{booktabs}
\usepackage[caption=false,font=footnotesize]{subfig}
\usepackage{textcomp}
\usepackage{stfloats}
\usepackage{url}
\usepackage{verbatim}
\usepackage{graphicx}
\usepackage{cite}
\begin{document}

\title{Stability-Performance Trade-off in Maritime Integrated Sensing, Communication, and Computation: A Lyapunov-Guided Dual-Timescale Hierarchical RL Approach}

\author{
	Ruoguang Li, \IEEEmembership{Senior Member, IEEE}, Yu Wang, Bingxin Tian, 
Benedetta Picano, \IEEEmembership{Member, IEEE},\\
Yingyang Chen, \IEEEmembership{Senior Member, IEEE}, Xiaomin Chen, \IEEEmembership{Member, IEEE}, Li Wang, \IEEEmembership{Senior Member, IEEE},\\ Guangjie Han, \IEEEmembership{Fellow,~IEEE}, and Zhu Han, \IEEEmembership{Fellow,~IEEE}\\
    \thanks{
An earlier version of this paper was presented in part at the 2026  IEEE International Symposium on Personal, Indoor and Mobile Radio Communications (PIMRC)\cite{YWang2026-1}. \emph{(Corresponding author: Guangjie Han.)} \emph{(Ruoguang Li and Yu Wang are co-first authors.)}
		  
R. Li, Y. Wang, G. Han are with the College of Information Science and Engineering, Hohai University, Changzhou 213200, China. R. Li is also with the State Key Laboratory of ISN, Xidian University, Xi’an 710071, China  (e-mail: ruoguangli@hhu.edu.cn, wangyu227@hhu.edu.cn, hanguangjie@gmail.com).

B. Tian is with the China Mobile Research Institute,
Beijing 100053, China. (e-mail: tianbingxin@chinamobile.com).

B. Picano is with the Department of Information Engineering, University of Florence, 50139 Firenze, Italy. (e-mail: benedetta.picano@unifi.it).
          
Y. Chen is with the College of Information Science and Technology, Jinan University, Guangzhou 510632, China (e-mail: chenyy@jnu.edu.cn).

X. Chen is with the School of Information Science and Technology, Nantong University, Nantong 226019, China (email: chenxm@ntu.edu.cn).
        
L. Wang is with the School of Computer Science (National Pilot Software Engineering School), Beijing University of Posts and Telecommunications, Beijing 100876, China (e-mail: liwang@bupt.edu.cn).
        
Z. Han is with the Department of Electrical and Computer Engineering, University of Houston, Houston, TX 77004 USA (e-mail: hanzhu22@gmail.com).
	}
    \vspace{-5mm}
}


\maketitle

\begin{abstract}
Driven by the ever-growing demand for intelligent maritime services, cooperative unmanned aerial vehicle (UAV) and unmanned surface vehicle (USV) operations, together with wireless integrated sensing, communication, and computation (ISCC), have become pivotal in maritime environments. However, the distinctive maritime environment precludes a direct application of terrestrial
ISCC models, while sensing, communication, and computation compete for the same spatio-temporal resources, thus yielding inherent performance trade-offs that become particularly severe in
resource-constrained ocean scenarios. To address these issues, we investigate an ISCC system in which multiple UAVs detect targets at the sea surface and partially offload the collected data to a USV for cooperative processing. Characterizing the maritime-specific environment impairments, we incorporate sea clutter and wave-induced USV roll motion into the sensing and communication models, respectively. Moreover, we introduce data queueing models at both the UAVs and the USV to capture the dynamic coupling among sensing-data generation, wireless offloading, and computation execution. The optimization objective is to minimize the long-term average system energy consumption by jointly optimizing UAV–target association, UAV trajectories, beamforming, time allocation, offloading ratios, and computation resource allocation. Given that these decisions differ in update frequency and time sensitivity, we propose a Lyapunov-guided dual-timescale hierarchical soft actor-critic (LTH-SAC) framework, which transforms the long-term problem into a sequence of queue-aware online subproblems and coordinates optimization across the two timescales. Simulation results verify the effectiveness of the proposed framework and demonstrate its capability to balance the stability-performance trade-off among sensing, communication, and computation.
\end{abstract}

\begin{IEEEkeywords}
Integrated sensing, communication, and computation (ISCC), unmanned aerial vehicle (UAV), unmanned surface vehicle (USV), hierarchical reinforcement learning.                       
\end{IEEEkeywords}

\section{Introduction}
The rapid development of intelligent maritime services, such as ocean monitoring, ship navigation, and offshore inspection,
is driving an increasing demand for ubiquitous perception, reliable communication, and prompt data processing\cite{FSAlqurashi2023}. However, these functionalities are usually designed separately in conventional maritime networks, making it difficult to simultaneously guarantee relevant quality-of-service (QoS) in harsh maritime environments with limited resources and stringent energy budgets. The recent advancement of integrated sensing and communication (ISAC) \cite{FLiu2022,FLiu2020,SLiu2020} and multi-access edge computing (MEC)\cite{Mach2017Mobile,Mao2017MEC,Lin2024Maritime} have paved the way for a broader paradigm, namely integrated sensing, communication, and computation (ISCC), which establishes an end-to-end information processing framework from data acquisition and transmission to edge task execution \cite{Wen2025Survey,CLi2025,WChen2026}. 

Nevertheless, constrained by the fixed deployment, conventional terrestrial infrastructures can hardly cover vast open-sea areas, thereby struggling to adaptively provide high performance ISCC to the sparse and time-varying distribution of maritime users. In addition, constructing offshore facilities together with backhaul links inevitably causes prohibitive costs. Due to the high maneuverability,
on-demand deployment, and predominant line-of-sight (LoS) channels, unmanned aerial vehicles (UAVs) are envisioned as promising ISCC platforms that provide wide-area coverage along with flexible sensing,
communication, and computing capabilities for maritime
services~\cite{Meng2024UAVEnabled,Zhou2018Computation,Liu2023MLTrajectory}. In parallel, unmanned surface vehicles (USVs) are free from the stringent payload and endurance constraints of airborne platforms, and thus are equipped with relatively powerful computing resources.  Operating as maritime edge nodes, the USV supports information aggregation and on-site decision making, thereby serving as a complement to UAVs in accomplishing complex maritime ISCC tasks~\cite{ZWang2024,RZhang2026}.

Despite the above advantages, the design of the UAV-USV collaborative maritime ISCC system poses unique challenges. On the one hand, {\em the maritime environment invalidates several models adopted in terrestrial ISCC designs}. Specifically, maritime target sensing is impaired by sea clutter scattered from the irregular
sea surface, which changes the statistical characteristics of received echoes and degrades sensing performance~\cite{CarreteroMoya2010SeaClutter}. Moreover, the air-to-sea (A2S) channels in open-sea environments are jointly shaped by the LoS path and sea-surface reflection, which is significantly different from terrestrial links~\cite{Nomikos2023Survey}. Finally, the wave-induced vessel motion perturbs signal transmission and reception. In particular, the roll motion of USVs continuously rotates the onboard antenna array, varying the effective angle of arrival (AoA) and beam alignment~\cite{Soltani2011ShipAntenna}. On the other hand, {\em sensing, communication, and computation compete for the same
spatio-temporal resources, leading to inherent performance trade-offs}. Allocating more time to sensing improves information acquisition and estimation accuracy, yet generates more data to be transmitted and processed. Similarly, increasing computation offloading alleviates the local processing burden of UAVs, but imposes additional pressure on the A2S links and the computing resources of USVs. Consequently, persistent data backlogs may accumulate and eventually interrupt long-endurance missions, which is particularly severe in resource-constrained ocean scenarios. Therefore, for resource-constrained UAV-USV collaborative maritime ISCC systems, it is crucial to
jointly model the queue dynamics at both the UAV and the USV, so as to capture the interactions among sensing-data generation, communication offloading, and computation execution, thereby balancing system performance, energy consumption, and queue stability.
\subsection{Related Works}
Recent studies have investigated the integration of sensing, communication, and computation in UAV-enabled networks. For example, the authors in \cite{PHou2025} developed a distributed deep reinforcement learning framework for cooperative UAV-enabled ISCC systems, jointly optimizing bandwidth allocation, transmit power, and computing capacity under sensing and latency constraints. For multi-UAV-assisted ISCC systems, \cite{Deng2024MultiUAVISCC} jointly optimized user association, target assignment, deep neural network (DNN) splitting, beamforming, computation resource allocation, and UAV locations to meet latency and accuracy requirements. In \cite{Zhou2024UAVISCC}, UAV CPU frequency, sensing power, user transmit power, and trajectory were jointly optimized to minimize weighted energy consumption. Similarly, \cite{Chen2024UAVISCC} jointly optimized UAV trajectory, beamforming, and computation offloading to improve computing efficiency while satisfying sensing-quality requirements.
However, most existing UAV-assisted ISCC designs were developed for terrestrial or generic air-to-ground scenarios and do not properly capture maritime-specific propagation and platform dynamics. 
Unlike terrestrial environments, maritime environments exhibit unique propagation and sensing characteristics. In \cite{Wang2018Wireless}, the authors investigated maritime wireless channels and highlighted distinctive features such as sparse scattering and sea-wave-induced channel variations. In \cite{Zhang2025U2SChannel}, a geometry-based UAV-to-ship channel model was developed by explicitly incorporating line-of-sight propagation, sea-surface reflection, and ship oscillation, demonstrating the significant impact of maritime dynamics on channel non-stationarity. From the sensing perspective, \cite{NAhmed2025} incorporated sea clutter and wave shadowing into a non-stationary UAV-to-ship multiple-input multiple-output (MIMO) channel model to characterize their effects on received signals. Moreover, \cite{JYu2024} demonstrated that wave-induced rotational motion of USVs, particularly rolling, can cause severe beam misalignment and consequently degrade wireless transmission performance. 

Although maritime integrated wireless systems have attracted increasing attention, existing studies remain relatively limited and predominantly focus on ISAC. In \cite{Zhang2024MaritimeISAC}, the authors developed a shipboard maritime ISAC system and jointly optimized system deployment and transmission parameters to improve communication performance while satisfying sensing requirements. In \cite{Zhang2024MaritimeResource}, the authors investigated multi-user maritime ISAC resource allocation and revealed the trade-off between communication and sensing performance under different sensing requirements. For maritime emergency networks, \cite{Li2025MaritimeISAC} jointly optimized UAV trajectory, subcarrier assignment, and power allocation to maximize end-to-end communication rate while satisfying sensing QoS constraints.
At the broader network-architecture level, \cite{Dai2022SmartOceans} surveyed ISCC networks for smart oceans, discussing the collaboration of space, aerial, sea-surface, and underwater layers, while dedicated studies on joint optimization of maritime ISCC systems remain limited. The incorporation of computation further complicates resource coordination because sensing, communication, and computation compete for limited time, transmit power, bandwidth, and computing resources. \cite{Wen2025Survey} reviewed the interdependence among these functionalities and the resulting resource competition. Accordingly, several studies have investigated performance trade-offs in ISCC systems. \cite{Xu2023UAVISCC} characterized the trade-off between computation capacity and sensing beampattern gain in UAV-enabled ISCC systems, showing that improving one may compromise the other. Similarly, \cite{Liu2025ISCCOffloading} jointly optimized transmit beamforming and computation offloading in a three-tier ISCC framework to reduce task execution latency while satisfying sensing constraints.

To characterize dynamic performance trade-offs, queueing models and queue-aware resource management have been explored in MEC and ISAC systems.
In \cite{Zhang2019UAVMEC}, the authors modeled stochastic task arrivals using data queues and employed Lyapunov optimization to jointly optimize offloading, resource allocation, and UAV trajectory while ensuring queue stability. 
For maritime internet of things (IoT)-MEC systems, \cite{Bai2025Dynamic} developed a Lyapunov-based dynamic offloading scheme to balance energy consumption and delay while accounting for stochastic arrivals and queue stability. 
In \cite{Gao2026ISCCQueue}, the authors incorporated sensing-data offloading and queue stability into a Lyapunov drift-plus-penalty framework for online resource management. For maritime UAV-ISAC networks, \cite{Li2026MaritimeISACQueue} investigated queue-aware data collection by modeling buffered sensing observations at maritime buoys and coordinating sensing, communication, and UAV mobility. In underwater acoustic sensor networks, \cite{Gou2026UnderwaterISACQueue} incorporated AoI-related queue states into multi-agent ISAC optimization to balance information freshness and communication throughput. Despite these advances, the joint queue dynamics of sensing, communication, and computation functionalities in maritime ISCC systems remain under-explored. In particular, the coupling among sensing data generation, UAV-side offloading queues, and USV-side computation queues, as well as its impacts on long-term stability and energy-performance trade-offs under maritime-specific conditions, have not been adequately characterized.
\subsection{Main Contributions}
Inspired by the above discussions, this paper investigates a multi-UAV-assisted maritime ISCC network, where ISAC-enabled UAVs sense maritime targets, locally process sensing data, and offload remaining tasks to a USV for further computation and fusion. The objective is to minimize the long-term average energy consumption under sensing, computation, mobility, and queue-stability constraints. To solve the resulting coupled stochastic optimization problem, we develop a Lyapunov-guided dual-timescale hierarchical soft actor-critic (LTH-SAC) framework for joint long-term service planning and short-term resource allocation. The main contributions of this work are summarized as follows:
\begin{enumerate}
	\item We consider a maritime ISCC framework that captures the impacts of complex maritime environments and the coupling among sensing, communication, and computation. In particular, sea clutter is incorporated into the sensing model, while maritime channel characteristics and USV roll motion are considered in the communication model to characterize the dynamic variations of performance. Moreover, UAV-side and USV-side queues are introduced to capture the end-to-end data flow from sensing-data generation to wireless offloading and computation processing, revealing the stability-performance trade-off.
    
    \item We formulate a dual-timescale long-term stochastic optimization problem to balance system energy consumption, sensing performance, communication efficiency, and queue stability. By jointly optimizing UAV-target association, trajectory planning, beamforming, time allocation, computation offloading, and computing resource allocation, the proposed formulation characterizes the stability–performance trade-off. Considering the heterogeneous update frequencies and time sensitivities of different decisions, a dual-timescale framework is introduced, where large-timescale service planning and small-timescale resource adaptation are coordinated through Lyapunov drift-plus-penalty optimization.

    \item We propose a LTH-SAC algorithm to solve the resulting mixed-integer and highly coupled optimization problem. The high-level hybrid SAC policy determines long-term decisions including UAV-target association and trajectory planning, while the low-level multi-agent SAC (MASAC) policy performs adaptive resource allocation. Simulation results demonstrate that the proposed framework achieves superior energy efficiency and queue stability compared with benchmark algorithms, while revealing and effectively balancing the inherent trade-offs among sensing, communication, computation, and system stability.
\end{enumerate}

The remainder of this paper is organized as follows. Section \ref{sec:model} presents the considered multi-UAV-assisted maritime ISCC system model and problem formulation. Section \ref{sec:problem} applies Lyapunov optimization to derive a slot-wise drift-plus-penalty problem. Section \ref{sec:algorithm} introduces the proposed LTH-SAC algorithm. Section \ref{sec:results} presents the simulation results, and finally Section \ref{sec:conclusion} concludes the paper.

\section{System Model}
\label{sec:model}

As illustrated in Fig.~\ref{fig1}, we consider a multi-UAV-assisted maritime ISCC network, which consists of a USV, \(K\) maritime targets,
and $U$ UAVs equipped with a uniform linear array of $N_{u,t}$ transmit antennas and $N_{u,r}$ receive antennas, 
which are defined as the sets  $\mathcal{K} =\{1,2,\cdots,K\}$ and $\mathcal{U}=\{1,2,\cdots,U\}$. 
Each UAV is equipped with an ISAC device and a MEC server to perform radar sensing, communication and
computing. The UAVs first sense maritime targets and generate sensing data. Due to the limited onboard computation capability, the sensed data are partially processed locally by UAVs and partially offloaded to the USV for further computation and fusion. To characterize the different variation rates of long-term service decisions and delay-sensitive resource allocation, we introduce a two-timescale framework, as illustrated in Fig.~\ref{fig2}. Specifically, the flight duration of the UAVs is divided into $C$ large-timescale frames. The set of frame indices is denoted by $\mathcal{C} \equiv \{1, \cdots, C\}$, and we use $t_c$ to indicate the $c$-th frame. Each frame can be
divided into multiple small time slots, which are  indexed by $n \in \mathcal{N} \equiv \{1, \cdots, N\}$. The $n$-th small time slot in frame $c$ is denoted by $t_c^n$, and the duration of each slot is $\delta_t$. Accordingly, the duration of each frame is $T_c=N\delta_t$, and the total flight duration is $T=C N\delta_t$.
\begin{figure}[t]
	\centering
	\setlength{\abovecaptionskip}{0pt}
	\includegraphics[width=3.5in, trim=10pt 60pt 10pt 20pt, clip]{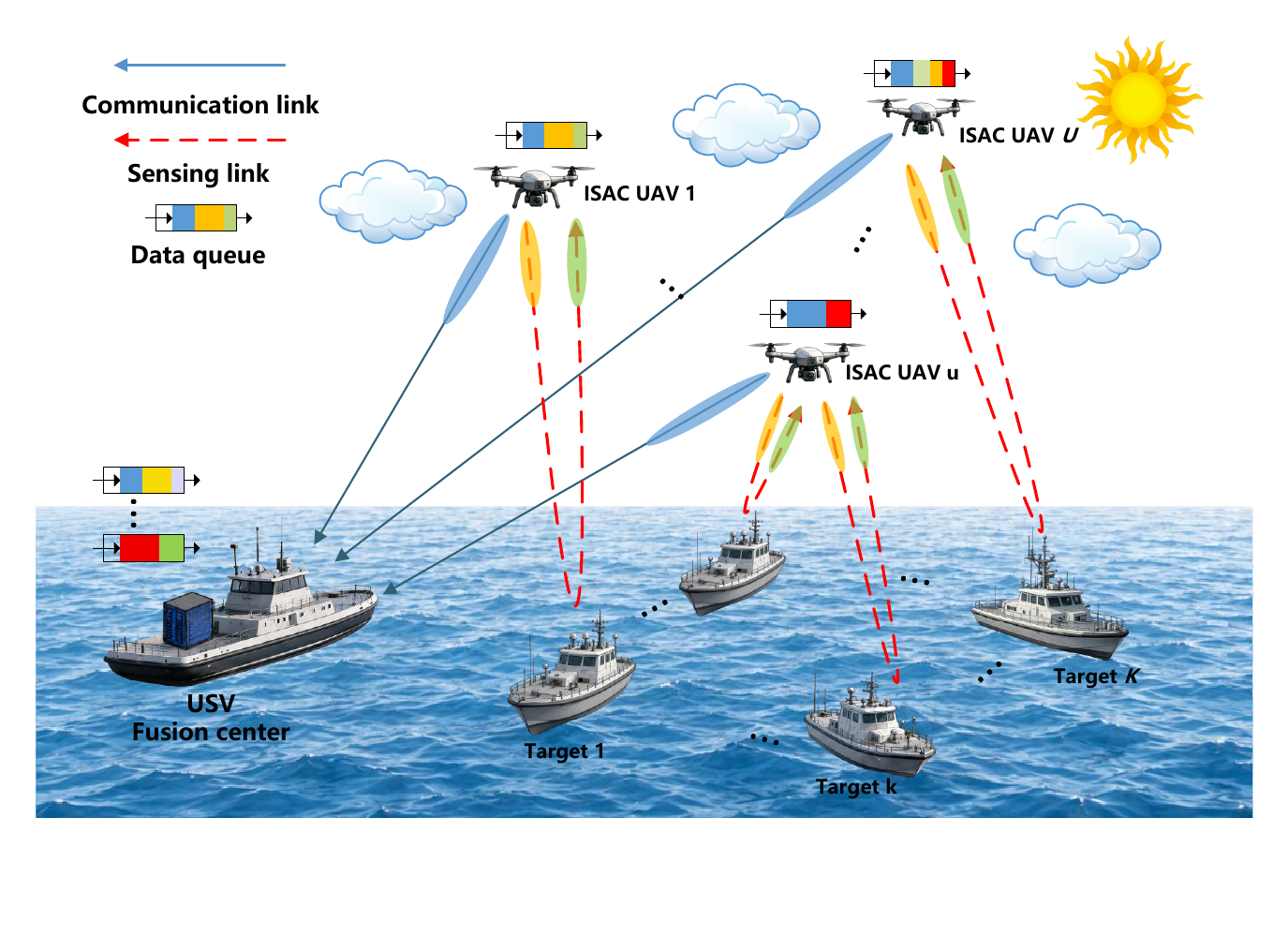}
	\caption{\small A multi-UAV-assisted maritime ISCC system model.}
	\label{fig1}
\end{figure}
According to the different update frequencies and decision impacts of the optimization variables, we classify them into large-timescale variables and small-timescale variables, as summarized in Table~\ref{tab:two_timescale_variables}. 
\begin{table}[t] 
	\centering
	\caption{Classification of Dual-Timescale Variables}
	\label{tab:two_timescale_variables}
	\renewcommand{\arraystretch}{1.15}
	\setlength{\tabcolsep}{3.5pt}
	\begin{tabular}{
			p{0.30\columnwidth} 
			>{\centering\arraybackslash}p{0.13\columnwidth} 
			p{0.51\columnwidth}
		}
		\hline
		\textbf{Variable} & \textbf{Timescale} & \textbf{Reason} \\
		\hline
		UAV-target association $\alpha_{u,k}[t_c]$
		& Large 
		& Frequent reassociation may introduce extra overhead and service instability. \\
		
		UAV trajectory $\mathbf{q}_u[t_{c}]$
		& Large 
		& 
		Frequent updates would introduce unnecessary motion control and propulsion costs. \\
		\hline
		Time allocation factor $\xi_u[t_c^n]$
		& Small 
		& Affects delay-sensitive sensing and data transmission, and depends on instantaneous channel adaptation. 
		\\
		
		Offloading ratio $\lambda_u[t_c^n]$
		& Small 
		& 
	    Needs fast adaptation to real-time channel quality, task backlog, and USV computing load. \\
		
		Sensing beamforming vector $\mathbf{p}_u[t_c^n]$
		& Small 
		& Depends on the instantaneous echo power and SCNR performance. \\
		
		Communication beamforming vector $\mathbf{f}_u[t_c^n]$
		& Small 
		& Highly sensitive to the instantaneous channel and USV roll motion. \\
		
		USV computing resource allocation $f_{u,0}[t_c^n]$
		& Small 
		& Directly affects real-time task execution delay and computation queues, and needs rapid adjustment according to the currently offloaded workloads. \\
		\hline
	\end{tabular}
\end{table}

We consider a three-dimensional (3D) Cartesian coordinate system, where the coordinates of the USV and target $k$ in frame $t_c$ are denoted by $\mathbf{q}_0[t_c]=(x_0[t_c],y_0[t_c],0)^T$ and $\mathbf{g}_k [ t_c ]=(x_k[t_c], y_k[t_c], 0)^{T}$. We consider that the waypoint of UAV $u$ at the beginning of frame $t_c$ is $\mathbf{q}_u [ t_c ]=(x_u [ t_c ], y_u[ t_c ],  H_u )^{T}$ with $H$ being the fixed altitude of each UAV, and the flight speed of UAV $u$ can be expressed as
$\mathbf v_u [t_c^n]
=
\frac{
	\mathbf q_u^H[t_{c+1}]-\mathbf q_u^H[t_c]
}
{N\delta_t}.$ 
In addition, we introduce a large-timescale binary variable $\alpha_{u, k} [ t_c ] \in\{0, 1 \} $ that indicates whether target $k$ is assigned to UAV $u$ during frame $t_c$. 
In order to improve the resource utilization efficiency, each target is assigned to at most one UAV \cite{PHou2025}, and each UAV only senses at most $K_u^{\rm max}$ targets due to hardware limitations, i.e., 
\begin{equation}
\sum_{u=1}^{U} \alpha_{u, k} [ t_c ] \leq1, \sum_{k=1}^{K} \alpha_{u, k} [ t_c ] \leq K_u^{\rm max}, \forall k \in \mathcal{K}, u \in \mathcal{U}, \forall t_c.
\end{equation}

Moreover, let $\lambda_u[t_c^n]\in[0,1]$ denote the offloading ratio of UAV $u$, while the remaining fraction $1-\lambda_u[t_c^n]$ is processed locally by UAV $u$.
To avoid interference, time division duplexing (TDD) is employed to perform radar sensing and communication. As illustrated in Fig.~\ref{fig2}, each time slot $t_c^n$ is further divided into two sub-slots. For UAV \( u \), the first sub-time slot $\xi_{u} [ t_c^n ] \delta_{t} $ is dedicated to target sensing, while the second sub-time slot $( 1-\xi_{u} [ t_c^n ] ) \delta_{t} $ is allocated for task offloading to the USV, where $\xi_{u} [ t_c^n ]\in{[0,1]}$ is the time slot allocation factor. After each sensing sub-time slot, local computation and task offloading are performed in parallel during the communication sub-slot. Furthermore, local computation in the current slot is assumed to be completed within the communication duration.
\subsection{Radar Sensing Model}
In the time slot $t_c^n$, each UAV first performs sensing to detect the targets on the sea surface. Let $s_{u,r}[t_c^n]$ denote the radar waveform emitted from the UAV $u$ with unit power, then the corresponding transmit signal can be expressed as
\begin{align}
\mathbf{x}_u[t_c^n]=\mathbf{p}_u[t_c^n]s_{u,r}[t_c^n],
\end{align}
where $\mathbf{p}_u[t_c^n]\in\mathbb{C}^{N_{u,t}\times 1}$ is the transmit beamforming vector.

Unlike terrestrial target detection, sea clutter is a non-negligible factor for maritime target detection. Specifically, sea clutter can be characterized as a composite Gaussian model from the perspective of physical scattering and mathematical tractability, which is represented as a spherically invariant random vector (SIRV), i.e., a product of two linearly independent components that reflect the roughness of the sea surface and can be expressed as\cite{Zhang2024MaritimeISAC}
\begin{equation}
	{\bf{c}}_{u} [t_c^n]=\sqrt{\tau} \mathbf{z}_u[t_c^n],
\end{equation}
where $\tau$ denotes the texture component, which generally remains constant in a coherent processing interval (CPI) and represents the macroscopic structure formed by long gravity waves. $\mathbf{z}_u[t_c^n]\sim\mathcal{CN}(\mathbf{0},\sigma^{2}_{u,z} [t_c^n] \mathbf{I}_{N_{u,r}})$ denotes the speckle component that represents the microscopic structure formed by capillary waves.
\begin{figure}[t]
	\includegraphics[width=0.5\textwidth, trim=30pt 18pt 10pt 50pt, clip]{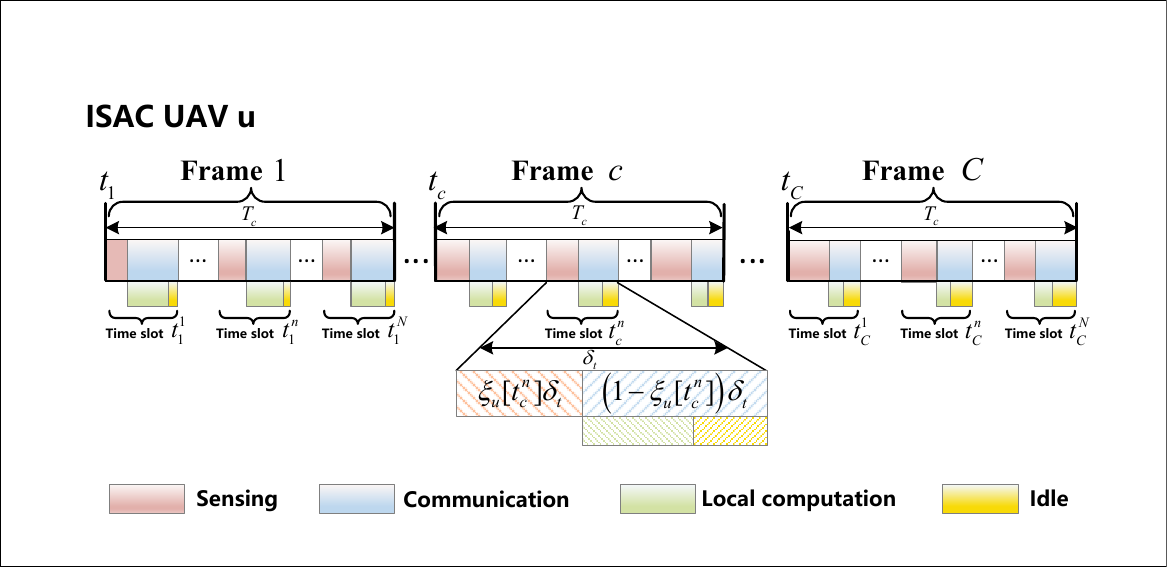}
	\centering
	\caption{\small Frame and slot structure of the maritime ISCC system.}\label{fig2}
	\vspace{-1mm}
\end{figure}
Then, the sensing echo received by UAV $u$ in time slot $t_c^n$ is
\begin{align}
&\mathbf{r}_{u} [ t_c^n ]=\sum_{k=1}^{K} \alpha_{u, k} [ t_c ] \beta_{u,k}[t_c^n] \mathbf{a} ( \theta_{u,k} [ t_c^n ] ) \mathbf{b}^{H} ( \theta_{u,k} [ t_c^n ] ) \mathbf{x}_u[t_c^n]\nonumber\\
&+\sum^{U}_{i=1,i\neq u}\sum^{K}_{l=1}\alpha_{i,l}[t_c]\beta_{i,l}[t_c^n] \mathbf{a}( \theta_{u,l} [ t_c^n ] ) \mathbf{b}^{H} ( \theta_{i,l} [ t_c^n ] ) \mathbf{x}_i[t_c^n]\nonumber\\
&+{\bf{c}}_{u} [t_c^n]+\mathbf{n}_{u,r} [ t_c^n ],
\end{align}
where $\mathbf{n}_{u,r} [t_c^n]\sim\mathcal{CN}(\mathbf{0},\sigma^2_{u,r}[t_c^n]\mathbf{I}_{N_{u,r}})$ represents the additive white Gaussian noise (AWGN). 
$\beta_{u,k}[t_c^n]$ is the reflection coefficient of target $k$ for UAV $u$, which is modeled as
$
	\beta_{u,k}[t_c^n]
	=
	\frac{
		\sqrt{
			\bar{\sigma}_{k}(f_c,\mathbf{Q}_k)
			B^{\rm ang}_{u,k}[t_c^n]
			B^{\rm fluc}_{k}[t_c^n]
		}
	}{2d_{u,k}[t_c^n]}
	e^{j\psi_{u,k}[t_c^n]},
$
where $\bar{\sigma}_{k}(f_c,\mathbf{Q}_k)$ denotes the nominal radar cross section (RCS), which depends on carrier frequency $f_c$ and target geometry parameter $\mathbf{Q}_k$. 
$d_{u,k}[t_c^n]=\|\mathbf{q}_u[t_c^n]-\mathbf{g}_k[t_c]\|$ denotes the distance between UAV $u$ and target $k$ in time slot $t_c^n$. Other parameters are defined in \cite{chen2009mimo}.
$\mathbf{a}(\theta_{u,k}[t_c^n])$ and $\mathbf{b}(\theta_{u,k}[t_c^n])$ are the receive and transmit steering vectors, respectively, and can be expressed as
\begin{equation}
	\resizebox{1\hsize}{!}{$\begin{aligned}
			{\bf{a}}( \theta_{u,k} [ t_c^n ])=[ 1, e^{j\pi \operatorname{s i n} ( \theta_{u,k} [ t_c^n ])},\cdots, e^{j\pi \operatorname{s i n} ( \theta_{u,k} [ t_c^n ]) ( N_{u,r}-1 )} ]^{T},\\
			{\bf{b}}( \theta_{u,k} [ t_c^n ])=[ 1, e^{j\pi \operatorname{s i n} ( \theta_{u,k} [ t_c^n ])},\cdots, e^{j\pi\operatorname{s i n} ( \theta_{u,k} [ t_c^n ]) ( N_{u,t}-1 )} ]^{T},
		\end{aligned}$}
\end{equation}
where $\theta_{u,k}[t_c^n]=\arcsin\!\left(\frac{H}{\|\mathbf{q}_u[t_c^n]-\mathbf{g}_k[t_c]\|}\right)$ is the elevation angle from the target $k$ to UAV $u$. Let $\mathbf{A}(\theta_{u,k} [t_c^n])\triangleq\mathbf{a} ( \theta_{u,k} [ t_c^n ] ) \mathbf{b}^{H} ( \theta_{u,k} [ t_c^n ] )$ and $\mathbf{B}( \theta_{u,l} [ t_c^n ],\theta_{i,l} [ t_c^n ] )\triangleq\mathbf{a} ( \theta_{u,l} [ t_c^n ] ) \mathbf{b}^{H} ( \theta_{i,l} [ t_c^n ] )$.
$\mathbf{r}_{u}[t_c^n]\in\mathbb{C}^{N_{u,r}\times 1}$ is filtered by a set of receive beamforming vectors $\{\mathbf{w}_{u,k}[t_c^n]\}$, and the corresponding output is given as
\begin{align}
&r_{u,k}[t_c^n]=\mathbf{w}_{u,k}^{H}[t_c^n]\mathbf{r}_{u}[t_c^n]\nonumber\\
&=\mathbf{w}_{u,k}^{H}[t_c^n]\sum_{k=1}^{K} \alpha_{u, k} [ t_c ] \beta_{u,k}[t_c^n] \mathbf{A} ( \theta_{u,k} [ t_c^n ] )\mathbf{x}_u[t_c^n]\nonumber\\
&+\mathbf{w}_{u,k}^{H}[t_c^n]\sum^{U}_{i=1,i\neq u}\sum^{K}_{l=1}\alpha_{i,l}[t_c]\beta_{i,l}[t_c^n] \mathbf{B}( \theta_{u,l} [ t_c^n ],\theta_{i,l} [ t_c^n ] ) \mathbf{x}_i[t_c^n]\nonumber\\
&+\mathbf{w}_{u,k}^{H}[t_c^n]{\bf{c}}_{u} [t_c^n]+\mathbf{w}_{u,k}^{H}[t_c^n]\mathbf{n}_{u,r} [ t_c^n ].
\end{align}
Then, the received SCNR at the UAV $u$ for the target $k$ at time slot $t_c^n$ can be expressed as
\begin{align}
\label{SCNR}
\gamma_{u,k}[t_c^n]
=\frac{\alpha_{u,k}[t_c]|\beta_{u,k}[t_c^n]|^2\left|\mathbf{w}_{u,k}^{H}[t_c^n]\mathbf d_{u,k}[t_c^n]\right|^2}{\mathbf{w}_{u,k}^{H}[t_c^n]\mathbf{\Psi}_{u,k}[t_c^n]\mathbf{w}_{u,k}[t_c^n]},
\end{align}
where
\begin{align}
	\mathbf{\Psi}_{u,k}[t_c^n]
	=&\mathbf{R}_1[t_c^n]+\mathbf{R}_2[t_c^n]+(\sigma_{u,z}^2[t_c^n]+\sigma^2_{u,r}[t_c^n])\mathbf{I}_{N_{u,r}},\\
	\mathbf{R}_1[t_c^n]
	=& \sum_{\tilde{k}=1,\tilde{k}\neq k}^{K}
	\alpha_{u,\tilde{k}}[t_c]|\beta_{u,\tilde{k}}[t_c^n]|^2
	\mathbf{d}_{u,\tilde{k}}[t_c^n]
	\mathbf{d}_{u,\tilde{k}}^{H}[t_c^n],\\
	\mathbf{R}_2[t_c^n]
	=& \sum_{i=1,i\neq u}^{U}\sum_{l=1}^{K}
	\alpha_{i,l}[t_c]|\beta_{i,l}[t_c^n]|^2
	\mathbf{e}_{u,i,l}[t_c^n]
	\mathbf{e}_{u,i,l}^{H}[t_c^n],
\end{align}
and $\mathbf d_{u,k}[t_c^n]
\triangleq
\mathbf A(\theta_{u,k}[t_c^n])\mathbf p_u[t_c^n]$, $\mathbf{d}_{u,\tilde{k}}[t_c^n]
\triangleq
\mathbf{A}(\theta_{u,\tilde{k}}[t_c^n])\mathbf{p}_{u}[t_c^n]$, $\mathbf{e}_{u,i,l}[t_c^n]
\triangleq
\mathbf{B}(\theta_{u,l}[t_c^n],\theta_{i,l}[t_c^n])
\mathbf{p}_{i}[t_c^n]$.\\
For given transmit beamforming vectors, UAV trajectories and target association variables, the optimal receive beamforming vector $\mathbf w_{u,k}^{\star}[t^n_c]$ can be obtained by solving the equivalent minimum variance distortionless response
(MVDR) problem for SCNR maximization, which is given by \cite{Lorenz2005MVDR}
\begin{equation}
	\label{receiving beamforming}
	\mathbf w_{u,k}^{\star}[t_c^n]
	=
	\frac{
		\mathbf{\Psi}_{u,k}^{-1}[t_c^n]\mathbf d_{u,k}[t_c^n]
	}{
		\mathbf d^H_{u,k}[t_c^n]\mathbf{\Psi}_{u,k}^{-1}[t_c^n]\mathbf d_{u,k}[t_c^n]
	}.
\end{equation}
By substituting \eqref{receiving beamforming} into \eqref{SCNR}, the SCNR can be re-expressed as
\begin{equation}
\gamma_{u,k}^{\star}[t_c^n]
=
\alpha_{u,k}[t_c]
|\beta_{u,k}[t_c^n]|^2
\mathbf d_{u,k}^H[t_c^n]
\mathbf{\Psi}_{u,k}^{-1}[t_c^n]
\mathbf d_{u,k}[t_c^n].
\end{equation}
Therefore, the radar estimation information rate at the UAV $u$ echoed by target $k$ can be expressed as\cite{SPeng2025}
\begin{align}
R^{\rm sen}_{u,k}[t_c^n]=\frac{\delta}{2\mu}\log_2(1+\gamma_{u,k}^{\star}[t_c^n]),
\end{align}
where $\delta$ and $\mu$ denote the radar duty factor and pulse duration of the radar, respectively.
The corresponding amount of sensing data is
\begin{align}
C^{\rm sen}_{u,k}[t_c^n]=\xi_u[t_c^n]\delta_tR^{\rm sen}_{u,k}[t_c^n].
\end{align}
Then, the total sensed data of UAV $u$ in the slot $t_c^n$ is given by
\begin{equation}
	C_u^{\mathrm{sen}}[t_c^n]=\sum_{k=1}^{K}\alpha_{u,k}[t_c]C_{u,k}^{\mathrm{sen}}[t_c^n].
\end{equation}
Accordingly, the sensing data processed locally by UAV $u$ at time slot $t_c^n$ is $(1-\lambda_u[t_c^n])C^{\rm sen}_u[t_c^n]$, and the data required to be offloaded to the USV is $\lambda_u[t_c^n]C^{\rm sen}_u[t_c^n]$.

\subsection{Communication Model}
Once the UAV finishes sensing task, it will perform local computation and offload part of the sensing data to the USV for further processing simultaneously. It is worth noting that the communication between the UAV and USV is significantly influenced by both the maritime environment and the motion state of the USV. Additionally, since there are few obstructions on the sea surface, the transmission link between the UAV and USV is generally LoS-dominated. Therefore, the channel coefficient vector between the UAV $u$ and the USV can be written as\cite{BLi2025}
\begin{align}
\label{comm_ch}
\mathbf h_u[t_c^n]
=
\sqrt{\rho_u[t_c^n]}
\mathbf b_{\rm eff}
(
\widetilde{\theta}_u[t_c^n]),
\end{align}
where $\rho_u[t_c^n]$ accounts for the large-scale fading attenuation with respect to path loss. $\mathbf b_{\rm eff}
(\widetilde{\theta}_u[t_c^n])
=
\left[
1,
e^{j\pi\sin\widetilde{\theta}_u[t_c^n]},
\cdots,
e^{j\pi(N_{u,t}-1)\sin\widetilde{\theta}_u[t_c^n]}
\right]^T$ is the roll-aware equivalent steering vector with $\widetilde{\theta}_u[t_c^n]$ being the coordinate-rotation-based equivalent elevation angle from the USV to the UAV.

\subsubsection{Model of $\rho_u[t_c^n]$}
In practice, the LoS link and the surface reflection link are two primarily considered paths in an A2S channel. Therefore, the curved-earth two-ray (CE2R) model is adopted to formulate the path loss\cite{Wang2018Wireless, CZeng2021}, which is written by
\begin{equation}
\resizebox{1\hsize}{!}{$\begin{aligned}
	L_{\rm CE2R}[t_c^n]
	=
	-10\log_{10}
	\left(
	\left(\frac{\lambda}{4\pi d_u[t_c^n]}\right)^2
	\left(
	2\sin\left(
	\frac{2\pi h_t h_r}{\lambda d_u[t_c^n]}
	\right)
	\right)^2
	\right),
\end{aligned}$}
\end{equation}
where $d_u[t_c^n]=\sqrt{(x_u[t_c^n] - x_0[t_c])^2 + (y_u[t_c^n] - y_0[t_c])^2}$ denotes the horizontal distance between UAV $u$ and the USV in time slot $t_c^n$. $\lambda$ is the wavelength. $h_t$ and $h_r$ represent the heights of the transmit and receive antennas, respectively. 

Therefore, the large-scale fading attenuation $\rho_u[t_c^n]$ in \eqref{comm_ch} can be expressed as
\begin{align}
\rho_u[t_c^n]=10^{\frac{-L_{{\rm CE2R}}[t_c^n]} {1 0}}.
\end{align}
\begin{figure}[t]
	\includegraphics[width=0.5\textwidth, trim=40pt 50pt 20pt 160pt, clip]{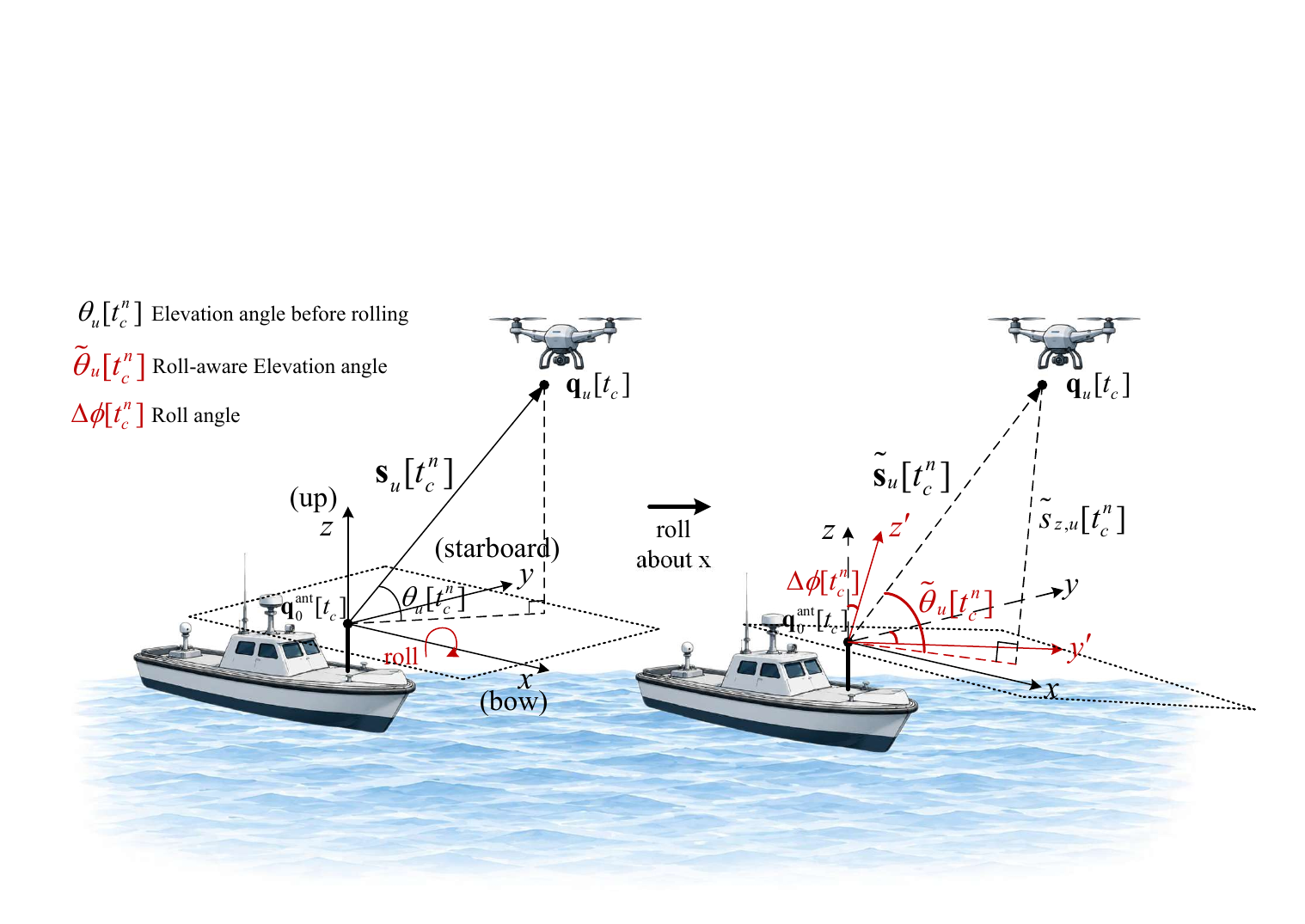}
	\centering
	\caption{\small Coordinate-rotation-based modeling of the roll-aware incident direction for the UAV–USV communication link.}\label{fig3}
	\vspace{-1mm}
\end{figure}
\subsubsection{Model of $\widetilde{\theta}_u[t_c^n]$}
As illustrated in Fig.~\ref{fig3}, we introduce a body-fixed coordinate system to describe the roll motion of the USV\cite{NAhmed2025}, where the $x$-axis points to the bow, the $y$-axis points to the starboard, and the $z$-axis points vertically upward respectively. The roll motion of the USV changes the local coordinate system of the receiving antenna into the coordinate system with axes $x$, $y'$ and $z'$, which further affects the equivalent incident direction. For simplicity,
let $\mathbf q_0^{\rm ant}[t_c]=(x_0[t_c],y_0[t_c],h_r)^T$ denote the USV antenna phase center. The unit link direction vector from the USV to UAV $u$ can be written as
\begin{equation}
\mathbf s_u[t_c^n]
=
\frac{
	\mathbf q_u[t_c^n]
	-
	\mathbf q_0^{\rm ant}[t_c]
}{
	\left\|
	\mathbf q_u[t_c^n]
	-
	\mathbf q_0^{\rm ant}[t_c]
	\right\|
}.
\end{equation}
Then, the roll rotation matrix corresponding to roll angle $\Delta\phi[t_c^n]$ is given by
\begin{equation}
\mathbf R_x(\Delta\phi[t_c^n])
=
\begin{bmatrix}
	1 & 0 & 0\\
	0 & \cos\Delta\phi[t_c^n] & -\sin\Delta\phi[t_c^n]\\
	0 & \sin\Delta\phi[t_c^n] & \cos\Delta\phi[t_c^n]
\end{bmatrix}.
\end{equation}

To characterize the incident direction observed in the roll-rotated local coordinate system of the USV, the global link direction vector $\mathbf s_u[t_c^n]$ is transformed into the USV local coordinate system as
\begin{equation}
\widetilde{\mathbf s}_u[t_c^n]
=
\mathbf R_x^T(\Delta\phi[t_c^n])
\mathbf s_u[t_c^n]=\left(
\widetilde{s}_{x,u}[t_c^n],\;
\widetilde{s}_{y,u}[t_c^n],\;
\widetilde{s}_{z,u}[t_c^n]
\right)^T,
\end{equation}
where $\widetilde{s}_{x,u}[t_c^n]$, $\widetilde{s}_{y,u}[t_c^n]$ and $\widetilde{s}_{z,u}[t_c^n]$ denote the $x$-axis, $y$-axis and $z$-axis components of $\widetilde{\mathbf s}_u[t_c^n]$, respectively, in the roll-rotated USV local coordinate system.

Then, the roll-aware equivalent elevation angle is obtained as
\begin{equation}
\widetilde{\theta}_u[t_c^n]=
\arcsin
\left(
\widetilde{s}_{z,u}[t_c^n]
\right).
\end{equation}
\subsubsection{Model of $\Delta\phi[t_c^n]$}
Under irregular wave conditions, the USV experiences variations in height and position due to waves. The rolling moment induced by wave disturbances is primarily related to the effective wave inclination angle $\phi_{e} (t)$, which can be modeled as\cite{JYu2024}
\begin{align}
	\phi_e(t) =
	K_1 K_2 \sin \chi \sum_{i=1}^{M} \left[ \sqrt{2S(\omega_i)\Delta\omega_i} \cos(\omega_{i}t + \varepsilon_i) \right],
\end{align}
where $M$ represents the number of waves in a random wave spectrum, and \(S(\omega)\) denotes the spectral density function of wave inclination angles. \(\omega_i\) is the angular frequency of the \(i\)-th wave,
$\Delta\omega_{i}$ is the frequency increment, and $\varepsilon_{i}$ represents the initial phase uniformly distributed within the range $[ 0, 2 \pi) $. $K_1$ and
$K_2$ denote the correction coefficients for finite ship width and
finite draft depth, respectively, $\chi$ is the angle of arrival.

Using the dynamic equations and the relationship between the wave disturbance moments and the effective wave inclination angle, the corresponding Laplace transform can be performed to derive the transfer function between the rolling angle and the effective wave inclination angle as
\begin{equation}
	G \left( s \right) \!=\!\frac{\Delta\phi\left( s \right)} {\phi_{e} \left( s \right)} \!=\! \frac{\omega_{\Delta\phi}^{2} \left( \frac{\Delta J_{\Delta\phi}} {D h} s^{2}+\frac{2 \zeta_{\Delta\phi}} {\omega_{\Delta\phi}} s \!+\! 1 \right)} {s^{2}+2 \zeta_{\Delta\phi} \omega_{\Delta\phi} s+\omega_{\Delta\phi}^{2}} ,
\end{equation}
where the parameters follow the standard ship roll model \cite{JYu2024}.

Therefore, the continuous-time roll response is obtained as
$
\Delta\phi_c(\tau)
= \sum_{i=1}^{M}A_i
\left|G(j\omega_{i})\right|
\cos
\left(
\omega_{i}\tau
+
\varepsilon_i
+
\angle G(j\omega_{i})
\right),
$
where $A_i=
K_1K_2\sin\chi
\sum_{i=1}^{M}
\sqrt{2S(\omega_i)\Delta\omega_i}$.

Then, the roll angle used in the slot $t_c^n$ is obtained by sampling the continuous-time roll response at the slot, i.e.,
\begin{equation}
\Delta\phi[t_c^n]
=
\Delta\phi_c([(c-1)N+n-1]\delta_t).\\
\end{equation}

Let $s_{u,c} [ t_c^n ] \sim {\cal C N} ( 0, 1 ) $ denote the signal transmitted by UAV $u$ to USV. By applying the beamforming vector $\mathbf{f}_u[t_c^n]$, the transmitted communication signal at the UAV $u$ in the time slot $t_c^n$ is given by
\begin{align}
\mathbf{x}_{u,c}[t_c^n]=\mathbf{f}_{u}[t_c^n]s_{u,c}[t_c^n].
\end{align}

Assume that the dedicated sensing signal is prior known by the USV, such that the sensing interference can be cancelled in advance. Therefore, the received signal at the USV in the time slot $t_c^n$ is represented as
\begin{equation}
	\mathrm{y}[ t_c^n ]=\sum_{u=1}^{U} {\bf{h}}_{u}^{H} [ t_c^n ] {\bf{f}}_{u} [ t_c^n ] s_{u,c} [ t_c^n ]+n_{u ,c} [ t_c^n ],
\end{equation}
where $n_{u,c} [t_c^n]\sim\mathcal{CN}( 0, \sigma_{u,c}^{2} )$ represents the AWGN. And the achievable rate from UAV $u$ to the USV at time slot $t_c^n$ can be written as
\begin{equation}
    R^{\rm com}_{u} [ t_c^n ]=\log_2 \left( 1+\frac{\left| {\bf{h}}_{u}^{H} [ t_c^n ] {\bf{f}}_{u} [ t_c^n ] \right|^{2}} {\sum_{i=1, i \neq u}^{U} \left| {\bf{h}}_{i}^{H} [ t_c^n ] {\bf{f}}_{i} [ t_c^n ] \right|^{2}+\sigma_{u,c}^{2}} \right).
\end{equation}

Therefore, the maximum amount of the offloading data at UAV $u$ can be expressed as
\begin{align}
C^{\rm com}_u[t_c^n]=B_u(1-\xi_u[t_c^n])\delta_tR^{\rm com}_u[t_c^n].
\end{align}
\subsection{Computation Model}
Each UAV is equipped with a MEC server to process tasks locally. Let $f_{u,\mathrm{loc}}[t_c^n]$ denote the local CPU-cycle frequency (cycle/s) of UAV $u$ in slot $t_c^n$, and let $l_u[t_c^n]$ denote the CPU cycles (in cycles/bit) required for processing one bit of local data. Then, the local computation rate of UAV $u$ can be expressed as
\begin{equation}
	R_{u,\text{loc}}^{\text{comp}}[t_c^n] = \frac{f_{u,\text{loc}}[t_c^n]}{l_{u}[t_c^n]}.
\end{equation}

Accordingly, the maximum amount of data that can be locally processed by UAV $u$ in slot $t_c^n$ is
\begin{equation}
	C_{u,\mathrm{loc}}^{\mathrm{comp}}[t_c^n]=R_{u,\mathrm{loc}}^{\mathrm{comp}}[t_c^n](1-\xi_{u}[t_c^n])\delta_t.
	\end{equation}
	
For simplicity, we consider the locally processed tasks are required to be completed within the current slot. Therefore, the amount of data assigned to local computing cannot exceed the local computation capability of UAVs:
\begin{equation}
	(1-\lambda_u[t_c^n])C_u^{\mathrm{sen}}[t_c^n]\le C_{u,\mathrm{loc}}^{\mathrm{comp}}[t_c^n], \quad \forall u, c, n.
\end{equation}

In addition, let $l_0 [t_c^n]$ denote the required CPU cycles for processing one bit of data, and $f_{u,0} [t_c^n]$ represent the CPU frequency allocated by the USV to UAV $u$. Then the computation rate at the USV is given by
\begin{equation}
	R^{\rm comp}_{u,0} [t_c^n]=\frac{f_{u,0} [t_c^n]} {l_0 [t_c^n]}.
\end{equation}

Then, the number of computation bits computed by the USV for UAV $u$ at time slot $t_c^n$ can be given as
\begin{equation}
	C_{u,0}^{\rm comp}[t_c^n]=R^{\rm comp}_{u,0} [t_c^n] \delta_t.
\end{equation}

To characterize the dependence of classification performance on
the amount of sensing data, we adopt a sensing-data-based
classification-error surrogate, 
which is given by \cite{Zhang2022Accelerating}
\begin{align}
e_u[t_c^n]=a_u[t_c^n](\max\left\{
N_u^{\rm sam}[t_c^n],\,1
\right\})^{-b_u[t_c^n]},
\end{align}
where $a_u[t_c^n]$ and $b_u[t_c^n]$ are hyperparameters of difficulty of the data from the UAV $u$. $N_u^{\rm sam}[t_c^n]$ is the number of samples that the communication can afford, which can be written as
$
N_u^{\rm sam}[t_c^n]=\frac{C^{\rm sen}_u[t_c^n]}{D_s},
$
where $D_s$ denotes the data size per sample.
\subsection{Energy Consumption Model}
\subsubsection{USV Energy Consumption Model}
The USV energy consumption at time slot $t_c^n$ is given by
\begin{align}
E_{0} [ t_c^n ]=& \sum_{u=1}^{U} \kappa_0 \delta_t \left( f_{u,0} [ t_c^n ] \right)^{3}, 
\end{align}
where $\kappa_0$ is a constant related to the hardware specifications of the USV.
\subsubsection{UAV Energy Consumption Model}
The UAV energy consumption consists of propulsion, computation, sensing, and offloading energy. Specifically, the propulsion energy of UAV $u$ at time slot $t_c^n$ is given by \cite{Zeng2019UAVEnergy}
\begin{align}
		 E^{\rm prop}_{u}[t_c^n]=& \left(P_{\rm bla} \Biggl( 1+\frac{3 \| {\bf{v}}_{u} [ t_c^n ] \|^{2}} {U_{\mathrm{t i p}}^{2}} \Biggr)+\frac{1} {2} d_{0} \rho_{0} s A \| {\bf{v}}_{u} [ t_c^n ] \|^{3}\right.\nonumber \\ 
         &\left.+ \, P_{\rm ind} \! \left( \sqrt{1+\frac{\| {\bf{v}}_{u} [ t_c^n ] \|^{4}} {4 v_{0}^{4}}}-\frac{\| {\bf{v}}_{u} [ t_c^n ] \|^{2}} {2 v_{0}^{2}} \right)^{\frac{1} {2}}\right)\delta_t, 
\end{align}
where the parameters follow the conventional rotary-wing UAV energy model \cite{Zeng2019UAVEnergy}.

The local computation energy consumption of UAV $u$ in the slot $t_c^n$ is modeled as
\begin{equation}
	E_{u}^{\mathrm{loc}}[t_c^n]=\kappa_u (1-\lambda_u[t_c^n])C^{\rm sen}_u[t_c^n]l_u[t_c^n] \left(f_{u,\mathrm{loc}}[t_c^n]\right)^2,
\end{equation}
where $\kappa_u$ is the effective switched capacitance coefficient related to the hardware capability of UAV $u$.

In addition, the sensing and offloading energy consumption  of the UAV $u$ are respectively given by
\begin{equation}
	E^{\rm sen}_u[t_c^n]=\|\mathbf{p}_u[t_c^n]\|^2\xi_u[t_c^n]\delta_t,
\end{equation}
\begin{equation}
	E_{u}^{\mathrm{o f f}} [ t_c^n ]=\left\| {\bf{f}}_{u} [ t_c^n ] \right\|^{2} \left( 1-\xi_{u} [ t_c^n ] \right) \delta_{t}.
\end{equation}

Therefore, the total energy consumption of UAV $u$ is
$
	E_{u}[t_c^n]= E^{\rm prop}_u[t_c^n]+E^{\rm loc}_u[t_c^n]+E^{\rm sen}_u[t_c^n]+E^{\rm off}_u[t_c^n],
$
and the total energy consumption of the system is
$
	E_{sum}[t_c^n]= \sum_{u=1}^{U}E_u[t_c^n]+\eta E_0[t_c^n],
$
where \(\eta\) is a weighting coefficient for the USV computation energy consumption.

\subsection{Dynamic Queue Scheduling Model}
It is worth noting that sensing and computation offloading generate continuous data flows at both UAV and USV sides. At the UAV side, increasing sensing time improves inference performance but produces more sensing data, while allocating more communication time enhances offloading capability, leading to a sensing–communication trade-off. When the generated data exceed the transmission capability, UAV-side data backlog occurs and may degrade network stability. Similarly, at the USV side, the aggregated offloading tasks from multiple UAVs may exceed the computation service capability, resulting in computation backlog. Therefore, UAV-side and USV-side queues are constructed to characterize the end-to-end data flow and maintain system stability.
For notational convenience, we define the next small slot of \(t_c^n\) as
\begin {equation}
\operatorname {next}(t_c^n)=
\begin {cases}
t_c^{n+1}, & n<N,\\
t_{c+1}^{1}, & n=N.
\end {cases}
\end {equation}

Specifically, in the time slot $t_c^n$, we define the amount of data actually transmitted from UAV $u$ to the USV as \(D_u^{\text{off}}[t_c^n]=\min\left\{Q_u^A[t_c^n]+\lambda_u[t_c^n] C_u^{\text{sen}}[t_c^n],C_u^{\text{com}}[t_c^n]\right\}\), then the data offloading queue at the UAV $u$ can be expressed as $Q_u^A[\operatorname{next}(t_c^n)]=\max\{Q_u^A[t_c^n]-C_u^{\rm com}[t_c^n]+\lambda_u[t_c^n]C_u^{\rm sen}[t_c^n],0\}$. Similarly, the computation queue corresponding to the offloaded tasks from UAV $u$ at the USV is updated as
	$Q_u^V[\operatorname{next}(t_c^n)]
	=
	\max\{Q_u^V[t_c^n]-C_{u,0}^{\rm comp}[t_c^n],0\}
	+
	D_u^{\text{off}}[t_c^n]$.

To ensure the stability of the data queue, the queue length of all devices should not increase infinitely, i.e.,
\begin{equation}
	\lim_{C\to+\infty}
	\frac{1}{CN}
	\sum_{c=1}^{C}
	\sum_{n=1}^{N}
	\mathbb E(Q_u^A[t_c^n])
	< \infty,\quad \forall u.
\end{equation}
\begin{equation}
	\lim_{C\to+\infty}
	\frac{1}{CN}
	\sum_{c=1}^{C}
	\sum_{n=1}^{N}
	\mathbb E(Q_u^V[t_c^n])
	< \infty,\quad \forall u.
\end{equation}
\subsection{Problem Formulation}
We aim to minimize long-term time-averaged total energy consumption subject to the system stability and performance requirement. Under the two-timescale system model, let $\Omega=\{\{\alpha_{u,k}[t_c]\},\{\mathbf{q}_u[t_c]\},\{\mathbf{p}_u[t_c^n]\},\{\mathbf{f}_u[t_c^n]\},\{\xi_u[t_c^n]\},\{\lambda_u[t_c^n]\},\\\{f_{u,0}[t_c^n]\}\}$, then the optimization problem can be formulated as
\begin{subequations}
	\begin{align}
		\text{(P1):}\quad &
		\min_{\Omega}\quad
		\lim_{C\to+\infty}
		\frac{1}{CN}
		\sum_{c=1}^{C}
		\sum_{n=1}^{N}
		\mathbb E\left(E_{\rm sum}[t_c^n]\right)\\
		\mathrm{s.t.}\quad
		&\xi_{u}[t_c^n]R_{u,k}^{\rm sen}[t_c^n]
		\ge
		\alpha_{u,k}[t_c]R_{\rm th}^{\rm sen},
		\quad \forall u,k,c,n, \label{eq:sensing}\\
		&\|\mathbf p_u[t_c^n]\|^2\le P_u^{\max},
		\quad \forall u,c,n, \label{eq:sensing beamfoming}\\
		&\|\mathbf f_u[t_c^n]\|^2\le P_u^{\max},
		\quad \forall u,c,n,\label{eq:communication beamforming}\\
		&0\le \xi_u[t_c^n]\le 1,\quad \forall u,c,n,\label{eq:xi}\\
		&0\le \lambda_u[t_c^n]\le 1,\quad \forall u,c,n,\label{eq:lambda}\\
		&\alpha_{u,k}[t_c]\in\{0,1\},
		\quad \forall u,k,c,\label{eq:alpha}\\
		&\sum_{u=1}^{U}\alpha_{u,k}[t_c]\le 1, 
		\quad \forall k,c,\label{eq:total UAV alpha}\\
		&\sum_{k=1}^{K}\alpha_{u,k}[t_c]\le K_u^{\max},
		\quad \forall u,c,\label{eq:total target alpha}\\
		&(1-\lambda_u[t_c^n])C_u^{\rm sen}[t_c^n]
		\le
		C_{u,\rm loc}^{\rm comp}[t_c^n],
		\quad \forall u,c,n,\label{eq:local}\\
		&f_{u,0}[t_c^n]\ge 0,\quad
		\sum_{u=1}^{U}f_{u,0}[t_c^n]\le f_0^{\max},
		\quad \forall c,n,\label{eq:computation}\\
		&\mathbf q_u[t_1]=\mathbf q_{u,I},
		\quad \forall u,\label{eq:start}\\
		&\|\mathbf q_u[t_c^n]-\mathbf q_{u'}[t_c^n]\|\ge d_{\min},
		\quad u\ne u',\forall c,n,\label{eq:security}\\
		&
		\|\mathbf v_u[t_c^n]\|
		\le
		v_{\max},
		\quad \forall u,c,n, \label{eq:speed}\\
		&e_u[t_c^n]\le e_{\rm th},
		\quad \forall u,c,n, \label{eq:train}\\
		&\lim_{C\to+\infty}
		\frac{1}{CN}
		\sum_{c=1}^{C}
		\sum_{n=1}^{N}
		\mathbb E(Q_u^A[t_c^n])
		< \infty,\quad \forall u, \label{eq:UAV queue}\\
		&\lim_{C\to+\infty}
		\frac{1}{CN}
		\sum_{c=1}^{C}
		\sum_{n=1}^{N}
		\mathbb E(Q_u^V[t_c^n])
		< \infty,\quad \forall u. \label{eq:USV queue}
	\end{align}
\end{subequations}
where \eqref{eq:sensing} guarantees the sensing performance requirement, while \eqref{eq:sensing beamfoming} and \eqref{eq:communication beamforming} constrain the sensing and communication power, respectively. \eqref{eq:xi} and \eqref{eq:lambda} define the feasible ranges of time allocation and offloading ratio. \eqref{eq:alpha}–\eqref{eq:total target alpha} represent UAV-target association constraints. \eqref{eq:local} and \eqref{eq:computation} restrict the UAV and USV computation resources, respectively. \eqref{eq:start}–\eqref{eq:speed} ensure UAV mobility constraints, while \eqref{eq:train} and \eqref{eq:UAV queue}–\eqref{eq:USV queue} guarantee inference accuracy and queue stability.


\section{Problem Transformation}
\label{sec:problem}
Due to the time-averaged objective function and long-term constraints \eqref{eq:UAV queue} and \eqref{eq:USV queue}, it is difficult to solve the problem (P1) directly. Therefore, we introduce the Lyapunov optimization method to transform the problem. 
The Lyapunov optimization is employed to address the long-term stochastic problem. The collections of UAV data queues $Q^A_u[t_c^n]$ and USV data queues $Q^V_u[t_c^n]$ can be denoted as $Q^A[t_c^n] \triangleq \left[ Q_1^A[t_c^n], Q_2^A[t_c^n], \dots, Q_U^A[t_c^n] \right]$ and $Q^V[t_c^n] \triangleq \left[ Q_1^V[t_c^n], Q_2^V[t_c^n], \dots, Q_U^V[t_c^n] \right]$.
Let $\Theta(t_c^n) \triangleq \left\{ Q^{A}[t_c^n], Q^{V}[t_c^n] \right\}$, and then the Lyapunov function is 
\begin{equation}
	L ( \Theta( t_c^n ) )=\frac{1} {2} \sum_{u=1}^{U} [ Q^A_{u} [ t_c^n ] ]^{2}+\frac{1} {2} \sum_{u=1}^{U} [ Q^V_{u} [ t_c^n ] ]^{2}.
\end{equation}

It can be found that the smaller the value of $L ( \Theta( t_c^n ) )$ is, the less queue backlog in the queue is. Without loss of generality, we assume that $L ( \Theta( 0 ) )=0$. 
We define the Lyapunov drift as the expected change in the
Lyapunov function over one time slot \cite{Bi2021Lyapunov}, that is
\begin{equation}
	\Delta( \Theta( t_c^n ) )=E [L ( \Theta( \operatorname{next}(t_c^n) ) )-L ( \Theta( t_c^n ) ) | \Theta( t_c^n ) ].
\end{equation}

To ensure the stability of the system, we should minimize the Lyapunov drift at each time slot and maintain the queue backlog at a low level. Obviously, reducing the queue backlog generally requires higher transmission and computation rates, which inevitably increase energy consumption. Therefore, to minimize the total energy consumption while maintaining queue stability, we first define the drift-plus-penalty function to establish a connection between the Lyapunov function and energy consumption, which can be expressed as
\begin{equation}
	\Delta( \Theta( t_c^n ) ) +\! V \! \mathbb{E} [E_ {\rm sum}[t_c^n]| \Theta( t_c^n )],
\end{equation}
where $V$ denotes the trade-off between the total energy consumption and Lyapunov drift, reflecting whether the current system places more emphasis on energy minimization or system stability. Specifically, the larger the value of $V$, the closer the total energy consumption of the system is to the optimal value. Conversely, a lower $V$ value focuses on minimizing the Lyapunov drift.

Directly solving the optimization problem of the given expression incurs high computational complexity. Therefore, we further introduce an upper bound of the drift-plus-penalty, as shown in the following Theorem 1.
\begin{theorem}
	For $\forall \Theta(t_c^n), \forall t \in \mathcal{T}$, the drift-plus-penalty has the following upper bound
	\begin{equation}
		\begin{aligned} 
			{{}} & {{} {{} \Delta( \Theta( t_c^n ) )+V \mathbb{E} [ E_{\rm s u m} [ t_c^n ] | \Theta( t_c^n ) ] \leq R+V \mathbb{E} [ E_{\rm s u m} [ t_c^n ] | \Theta( t_c^n ) ]}} \\ {{}} & {{} {{}+\mathbb{E} \Big\{\sum_{u=1}^{U} Q^A_{u} [ t_c^n ] \Big( \lambda_u[t_c^n]C^{\rm sen}_u[t_c^n]-C_u^{\mathrm{com}}[t_c^n] \Big) | \Theta( t_c^n ) \Big\}}} \\ {{}} & {{} {{}+\mathbb{E} \Big\{\sum_{u=1}^{U} Q^V_{u} [ t_c^n ] \Big( C_u^{\mathrm{com}}[t_c^n]-C_{u,0}^{\rm comp}[t_c^n] \Big) | \Theta( t_c^n ) \Big\},}} 
		\end{aligned} 
	\end{equation}
\end{theorem}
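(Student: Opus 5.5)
The plan is the standard Lyapunov drift argument. I would bound the one-slot change of each squared queue using the update rules of the dynamic queue scheduling model. Then I would collect all terms that do not depend on the queue backlogs into a finite constant $R$. Adding the penalty term $V\mathbb{E}[E_{\rm sum}[t_c^n]\mid\Theta(t_c^n)]$ to both sides at the end gives the statement.

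\emph{UAV-side queue.} Write $a_u=\lambda_u[t_c^n]C_u^{\rm sen}[t_c^n]$ and $b_u=C_u^{\rm com}[t_c^n]$. Since $Q_u^A[t_c^n]\ge0$ and $(\max\{x,0\})^2\le x^2$, the update $Q_u^A[\operatorname{next}(t_c^n)]=\max\{Q_u^A[t_c^n]-b_u+a_u,0\}$ gives
\begin{equation*}
\big(Q_u^A[\operatorname{next}(t_c^n)]\big)^2-\big(Q_u^A[t_c^n]\big)^2\le (a_u-b_u)^2+2Q_u^A[t_c^n](a_u-b_u).
\end{equation*}

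\emph{USV-side queue.} Write $c_u=C_{u,0}^{\rm comp}[t_c^n]$ and $D_u=D_u^{\rm off}[t_c^n]$. From $Q_u^V[\operatorname{next}(t_c^n)]=\max\{Q_u^V[t_c^n]-c_u,0\}+D_u$ I would use the elementary inequality
\begin{equation*}
(\max\{Q-c,0\}+D)^2\le Q^2+c^2+D^2+2Q(D-c),
\end{equation*}
valid for $Q,c,D\ge0$. It follows from $(\max\{Q-c,0\})^2\le(Q-c)^2$ and $\max\{Q-c,0\}\le Q$. Then I would use $D_u\le C_u^{\rm com}[t_c^n]$, which holds by the definition of $D_u^{\rm off}$ as a minimum, together with $Q_u^V\ge0$. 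This replaces $2Q_u^VD_u$ by $2Q_u^VC_u^{\rm com}$ and $D_u^2$ by $(C_u^{\rm com})^2$, which produces exactly the $Q^V_u(C_u^{\rm com}-C_{u,0}^{\rm comp})$ term in the statement.

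\emph{Combining and defining $R$.} Summing over $u$, halving, and taking the conditional expectation given $\Theta(t_c^n)$ yields the claimed form with
\begin{equation*}
R\ge\frac12\sum_{u=1}^U\mathbb{E}\Big[(a_u-b_u)^2+c_u^2+b_u^2\,\Big|\,\Theta(t_c^n)\Big].
\end{equation*}

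The main obstacle is showing that such a finite $R$, independent of the queue state, actually exists. This requires uniform bounds on $C_u^{\rm sen}$, $C_u^{\rm com}$ and $C_{u,0}^{\rm comp}$.
\begin{itemize}
\item $C_{u,0}^{\rm comp}\le f_0^{\max}\delta_t/l_0$ by \eqref{eq:computation}.
\item $C_u^{\rm com}\le B_u\delta_t\log_2(1+N_{u,t}\rho_{\max}P_u^{\max}/\sigma_{u,c}^2)$, since $\|\mathbf f_u\|^2\le P_u^{\max}$ and $|\mathbf h_u^H\mathbf f_u|^2\le\rho_u N_{u,t}\|\mathbf f_u\|^2$. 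Here $\rho_u$ is bounded because the CE2R gain satisfies $\rho_u\le(\lambda/(4\pi d_u))^2\cdot4$, and the UAV-to-antenna separation is bounded below by the altitude gap $H-h_r>0$.
\item $C_u^{\rm sen}\le K_u^{\max}\delta_t\frac{\delta}{2\mu}\log_2(1+\gamma_{\max})$. The bound $\gamma_{\max}$ holds because $\Psi_{u,k}\succeq(\sigma_{u,z}^2+\sigma_{u,r}^2)\mathbf I$, $\|\mathbf d_{u,k}\|^2\le N_{u,r}N_{u,t}P_u^{\max}$, and $|\beta_{u,k}|^2$ is bounded since $d_{u,k}\ge H$ and the RCS factors are bounded.
\end{itemize}

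With these bounds, $(a_u-b_u)^2\le\max\{a_u,b_u\}^2$ is bounded as well. Taking $R$ as half the sum of the squared bounds completes the argument.
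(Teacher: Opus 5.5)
Your drift algebra is correct and is the standard argument the paper defers to its earlier work. It uses $(\max\{x,0\})^2\le x^2$ for $Q_u^A$, the inequality $(\max\{Q-c,0\}+D)^2\le Q^2+c^2+D^2+2Q(D-c)$ for $Q_u^V$, and $0\le D_u^{\rm off}[t_c^n]\le C_u^{\rm com}[t_c^n]$ to produce the $Q_u^V(C_u^{\rm com}-C_{u,0}^{\rm comp})$ term.

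The gap is in the step you yourself flag as the main obstacle: the uniform bound on $C_u^{\rm com}$. In the CE2R model, $d_u[t_c^n]$ is the \emph{horizontal} UAV--USV distance, not the 3D separation. (This differs from $d_{u,k}[t_c^n]$ in the sensing model, where your use of $d_{u,k}\ge H$ is fine.) The altitude gap $H-h_r$ therefore gives no lower bound on $d_u$. Nothing in (P1) keeps a UAV from flying over the USV either, since \eqref{eq:security} only separates UAVs from each other. Because $\rho_u=\frac{\lambda^2}{4\pi^2 d_u^2}\sin^2\!\big(\frac{2\pi h_th_r}{\lambda d_u}\big)$, choosing $d_u=4h_th_r/(\lambda(2m+1))$ gives $\rho_u=\lambda^2/(4\pi^2 d_u^2)\to\infty$ as $m\to\infty$. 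With a matched $\mathbf f_u$ and $\mathbf f_i=\mathbf{0}$ for $i\neq u$, $C_u^{\rm com}$ is unbounded, so your $\rho_{\max}$ does not exist.

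This is more than a flaw in the justification, because some bound on $C_u^{\rm com}$ is genuinely necessary for the inequality. Take a state with $Q_u^A[t_c^n]>Q_u^V[t_c^n]$ and let $C_u^{\rm com}\to\infty$. The drift stays bounded: $Q_u^A$ empties, and $Q_u^V$ grows by at most $Q_u^A+\lambda_u C_u^{\rm sen}$. The right-hand side, however, contains $(Q_u^V-Q_u^A)\,C_u^{\rm com}\to-\infty$, so no finite, state-independent $R$ works.

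You therefore have to add an explicit hypothesis. Any one of the following suffices:
\begin{itemize}
\item a minimum horizontal UAV--USV separation $d_u\ge d_0>0$, which gives $\rho_u\le\lambda^2/(4\pi^2 d_0^2)$;
\item reading the CE2R formula with the 3D link distance;
\item directly assuming bounded per-slot offloading amounts.
\end{itemize}
The last option is what the paper tacitly assumes when it simply calls $R$ a bounded constant. Once such a hypothesis is stated, the rest of your argument goes through, including the bounds on $C_u^{\rm sen}$ and $C_{u,0}^{\rm comp}$.
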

where $R$ is a bounded constant. 

\textit{Proof:} A similar proof has been provided in our previous work \cite{YWang2026-1} and is omitted here for brevity.

Finally, according to the Lyapunov optimization method, the stochastic optimization P1 is transformed to focus on minimizing the Lyapunov drift-plus-penalty upper bound in each time slot, by disregarding the fixed term $R$, P1 can be formulated as follows
\begin{align}
	\text{(P2):}\quad & \min_{\Omega} 
	\mathcal{G}[t_c^n] \\
	\text{s.t.} \quad & \eqref{eq:sensing}-\eqref{eq:train}, \nonumber
\end{align}
where  the specific form of the objective function is
\begin{align}
		\mathcal{G}[t_c^n]=&V  E_{\rm s u m} [ t_c^n ]+  \sum_{u=1}^{U} Q^A_{u} [ t_c^n ] \Big( \lambda_u[t_c^n]C^{\rm sen}_u[t_c^n]-C_u^{\mathrm{com}}[t_c^n] \Big)\nonumber\\&+\sum_{u=1}^{U} Q^V_{u} [ t_c^n ] \Big( C_u^{\mathrm{com}}[t_c^n]-C_{u,0}^{\rm comp}[t_c^n] \Big).
\end{align}

\section{Lyapunov-Guided Dual-Timescale Hierarchical Soft Actor-Critic Algorithm}
\label{sec:algorithm}

For problem (P2), the resulting mixed-integer dual-timescale nonlinear optimization problem is highly coupled, where large-timescale decisions, including UAV association and trajectory planning, affect subsequent small-timescale resource allocation, while slot-level variables are jointly coupled through sensing, communication, computation, and queue evolution. 
To address this challenge, we propose an LTH-SAC framework, as illustrated in Fig.~\ref{fig4}. At the beginning of each frame \(t_c\), the high-level policy observes the frame-level state and outputs the frame-level action including UAV-target association, UAV position, and reference values of time allocation and offloading ratio. Since the high-level action contains discrete and continuous variables, SAC with Gumbel-Softmax is adopted to handle the hybrid action space. Then, given the high-level action within each frame, the low-level policy adjusts slot-level resource allocation, including time allocation, offloading ratio, beamforming, and USV computing resources. After the low-level policy executes the
slot-level actions for all \(N\) small slots in frame \(t_c\), the high-level policy receives an aggregated frame-level reward based on slot-level outcomes and updates the next frame state.

\begin{figure*}[t]
	\centering
	\includegraphics[width=\textwidth, trim=5pt 15pt 10pt 10pt, clip]{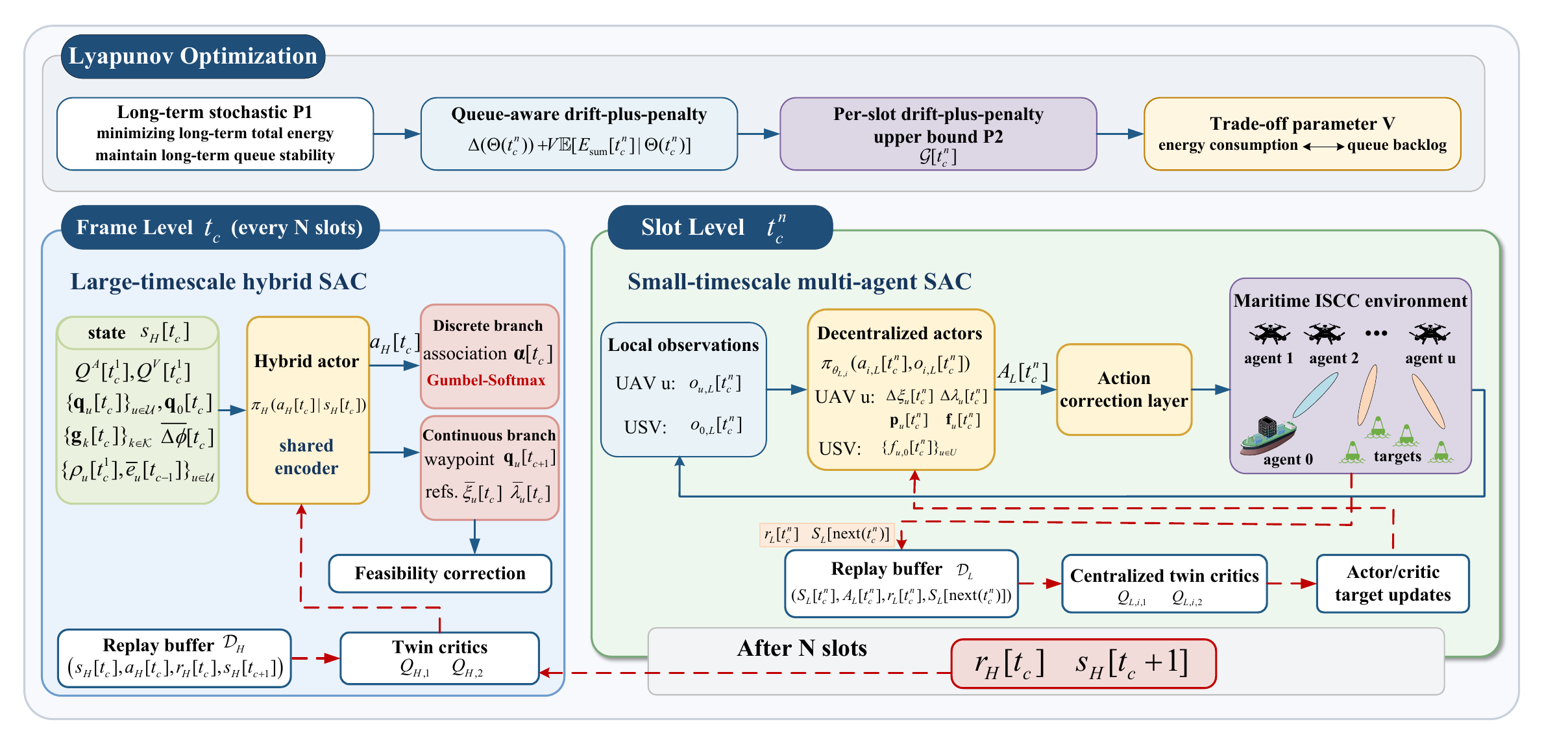}
	\vspace*{-0.05in}
	\caption{\small The proposed Lyapunov-guided dual-timescale hierarchical SAC framework for multi-UAV-assisted maritime ISCC systems.}
	\label{fig4}
\end{figure*}

\subsection{Large-Timescale Decision Based on Hybrid SAC}

To capture the long-term impact of UAV-target association, UAV trajectory,
and service-mode reference decisions under the dynamic maritime environment,
the high-level policy adopts a hybrid SAC framework with Gumbel-Softmax
reparameterization, whose state space, action space, and reward are defined as follows.

\subsubsection{State Space}For the high-level policy, 
the large-timescale state space in frame \(t_c\) can be defined as
\begin{equation}
	\begin{aligned}
		s_H[t_c]=
		\Big\{
		&Q^A[t_c^1],
		 Q^V[t_c^1], \mathbf q_0[t_c], \Delta\phi[t_c^1],\\
		&\{\mathbf q_u[t_c], \mathbf{g}_k[t_c],  \rho_u[t_c^1], \bar{e}_u[t_{c-1}] \}_{u\in\mathcal U, k\in\mathcal K} \Big\},
	\end{aligned}
\end{equation}
where $\bar{e}_u[t_{c-1}]$ is the statistical information classification-error state.

\subsubsection{Action Space}Corresponding to problem (P2), the high-level action space 
can be described by
\begin{equation}
	\begin{aligned}
		a_H[t_c]=
		\Big\{
		&\{\alpha_{u,k}[t_c],
		\mathbf q_u[t_{c+1}],
		\bar{\xi}_u[t_c],
	    \bar{\lambda}_u[t_c]\}_{u\in\mathcal U,k\in\mathcal K}
		\Big\},
	\end{aligned}
\end{equation}
where $\bar{\xi}_u[t_c]$ and $\bar{\lambda}_u[t_c]$ are reference values of the time allocation factor and offloading ratio, respectively.

Since the high-level action contains both discrete and continuous variables,
the high-level policy is factorized as
\begin{equation}
	\pi_H(a_H[t_c]|s_H[t_c])
	=
	\pi_H^{d}(\boldsymbol{\alpha}[t_c]|s_H[t_c])
	\pi_H^{c}(\mathbf a_H^{c}[t_c]|s_H[t_c]),
\end{equation}
where $\boldsymbol{\alpha}[t_c]=\{\alpha_{u,k}[t_c]\}_{u\in\mathcal U,k\in\mathcal K}$ denotes the discrete
association action, and
	$\mathbf a_H^{c}[t_c]
	=
	\left\{
	\{\mathbf q_u[t_{c+1}]\},
	\{\bar{\xi}_u[t_c]\},
	\{\bar{\lambda}_u[t_c]\}
	\right\}$
denotes the continuous high-level action.

For the discrete UAV-target association, the high-level actor outputs logits
\(\ell_{k,j}[t_c]\) for assigning target \(k\) to option
\(j\in\{0,\cdots,U\}\), where \(j=0\) means that target \(k\) is not assigned
to any UAV. The Gumbel-Softmax reparameterization is used to generate a
differentiable approximation of the discrete association decision:
\begin{equation}
	\tilde y_{k,j}[t_c]
	=
	\frac{
		\exp\left((\ell_{k,j}[t_c]+g_{k,j})/\tau_g\right)
	}{
		\sum_{j'=0}^{U}
		\exp\left((\ell_{k,j'}[t_c]+g_{k,j'})/\tau_g\right)
	},
\end{equation}
where \(g_{k,j}=-\log(-\log \epsilon_{k,j})\),
\(\epsilon_{k,j}\sim \mathrm{Uniform}(0,1)\), and \(\tau_g\) is the Gumbel
temperature. During training, \(\tilde y_{k,j}[t_c]\) is used for
differentiable backpropagation.
During execution, the discrete association is obtained by selecting the maximum probability output and then corrected to satisfy association constraints.
The continuous high-level action is generated by the reparameterization trick
in SAC, and the raw continuous output is projected into the feasible region to satisfy constraints.

\subsubsection{Reward}We define the high-level reward for executing the action $a_H[t_c]$ based
on the state $s_H[t_c]$ during frame \(t_c\) as
\begin{equation}
	r_H[t_c]
	=
	\sum_{n=1}^{N} r_L[t_c^n]
	-
	\mathcal P_H[t_c],
\end{equation}
where \(\mathcal P_H[t_c]\) denotes the penalty for violating large-timescale
constraints, including the UAV-target association constraints, UAV speed
constraints, and collision avoidance constraints.

\subsubsection{Neural Networks}In the high-level SAC, two soft Q-functions \(Q_{H,1}\) and \(Q_{H,2}\) are
used to reduce the overestimation bias. The high-level transition is stored as $\left(s_H[t_c],a_H[t_c],r_H[t_c],s_H[t_{c+1}]\right)$ in the high-level replay buffer \(\mathcal D_H\). For a mini-batch sampled
from \(\mathcal D_H\), the high-level soft target value is given by
\begin{equation}
	\begin{aligned}
		y_H
		=
		r_H[t_c]
		+
		\gamma_H
		\Big[
		&\min_{i=1,2}
		Q_{H,i}^{-}(s_H[t_{c+1}],a_H'[t_{c+1}])\\
		&-
		\eta_H
		\log\pi_{H,\theta}(a_H'[t_{c+1}]|s_H[t_{c+1}])
		\Big],
	\end{aligned}
\end{equation}
where \(a_H'[t_{c+1}]\sim\pi_H(\cdot|s_H[t_{c+1}])\),
\(Q_{H,i}^{-}\) is the target Q-network with parameter $\psi_{H,i}^{-}$, \(\gamma_H\) is the high-level
discount factor, and \(\eta_H\) is the entropy temperature. The high-level Q-networks are updated by minimizing the temporal-difference error between the estimated Q-values and the soft target values, which is given by
\begin{equation}
	J_Q^H(\psi_{H,i})
	=
	\mathbb E_{\mathcal D_H}
	\left[
	\left(
	Q_{H,i}(s_H[t_c],a_H[t_c];\psi_{H,i})-y_H
	\right)^2
	\right],
\end{equation}
where \(i\in\{1,2\}\), and \(\psi_{H,i}\) denotes the parameter of the
\(i\)-th high-level Q-network.\\
The high-level policy network is updated by minimizing the entropy-regularized
policy objective, which is given by
\begin{equation}
	\begin{aligned}
		J_{\pi}^H(\theta_H)
		=
		\mathbb E_{s_H[t_c]\sim\mathcal D_H}
		\Big[
		&\eta_H
		\log\pi_H(a_H[t_c]|s_H[t_c])\\
		&-
		\min_{i=1,2}Q_{H,i}(s_H[t_c],a_H[t_c])
		\Big],
	\end{aligned}
\end{equation}
where \(\theta_H\) denotes the parameter of the high-level policy network. The target Q-networks are then updated by the following rule 
\begin{equation}
	\psi_{H,i}^- \leftarrow
	\tau_H\psi_{H,i}+(1-\tau_H)\psi_{H,i}^-,
	\quad i\in\{1,2\},
\end{equation}
where \(\tau_H\in[0,1]\) is the soft update coefficient.
\subsection{Small-Timescale Resource Allocation Based on MASAC}

Due to the continuous nature and high-dimensional coupling characteristics of slot-level decision variables, we adopt a centralized training with decentralized execution framework and design a cooperative scheme based on MASAC. During the training phase, all UAV agents are jointly optimized by leveraging global state information. In the execution phase, each agent independently outputs its actions only according to its local observation.
Specifically, the set of low-level agents is defined as
$\mathcal I_L=\{0,1,\cdots,U\}$,
where \(u\in\{1,\cdots,U\}\) denotes UAV agent $u$, and \(0\)
denotes the USV agent. 

\subsubsection{Environment State Space}The global state space of the low-level multi-agent SAC in
slot \(t_c^n\) includes queues, UAV-USV channel and the USV roll angle information, which is given by
\begin{equation}
	\begin{aligned}
		S_L[t_c^n]=
		\Big\{
		&a_H[t_c],
		Q^A[t_c^n],
	    Q^V[t_c^n], \{\mathbf h_u[t_c^n]\}_{u\in\mathcal U},
		\Delta\phi[t_c^n],\\
		&\{\mathbf q_u[t_c^n], \rho_u[t_c^n],\theta_{u,k}[t_c^n]\}_{u\in\mathcal U,k\in\mathcal K}
		\Big\},
	\end{aligned}
\end{equation}
where \(a_H[t_c]\) is the high-level action.

\subsubsection{Observation Space}In the partially observable ISCC
environment, local observations for UAV agent $u$ and the USV
agent in the slot $t^n_c$ are detailed below.

For the UAV agent \(u\), it observes the states of itself and targets, as well as the roll angle state of the USV, i.e.,
\begin{equation}
	\begin{aligned}
		o_{u,L}[t_c^n]=
		\Big\{
		&\{\alpha_{u,k}[t_c]\}_{k\in\mathcal K},
		\mathbf q_u[t_c^n], 
		Q_u^A[t_c^n],
		Q_u^V[t_c^n],\mathbf h_u[t_c^n],\\
		&
		\Delta\phi[t_c^n],
		\{\theta_{u,k}[t_c^n], d_{u,k}[t_c^n]]\}_{k\in\mathcal K},
		\bar{\xi}_u[t_c],
		\bar{\lambda}_u[t_c]
		\Big\}.
	\end{aligned}
\end{equation}

For the USV agent, it observes the states of itself and UAVs, i.e.,
\begin{equation}
	\begin{aligned}
		o_{0,L}[t_c^n]
		=
		\Big\{
		&a_H[t_c],
        Q^A[t_c^n],
	    Q^V[t_c^n],
		\{\mathbf h_u[t_c^n]\}_{u\in\mathcal U},
		\Delta\phi[t_c^n]
		\Big\}.
	\end{aligned}
\end{equation}
\subsubsection{Action Space}
Based on the associated observational data $\{o_{u,L}[t_c^n], o_{0,L}[t_c^n]\}$, each UAV and the USV make decisions of action. In the time slot $t_c^n$, the UAV agent \(u\) outputs the action
\begin{equation}
	a_{u,L}[t_c^n]
	=
	\left\{
	\Delta\xi_u[t_c^n],
	\Delta\lambda_u[t_c^n],
	\mathbf p_u[t_c^n],
	\mathbf f_u[t_c^n]
	\right\},
\end{equation}
and the USV agent outputs the computing resource allocation action
\begin{equation}
	a_{0,L}[t_c^n]=
	\{f_{u,0}[t_c^n]\}_{u\in\mathcal U}.
\end{equation}
Accordingly, the joint low-level action is denoted by
\begin{equation}
	A_L[t_c^n]
	=
	\left\{
	a_{0,L}[t_c^n],
	a_{1,L}[t_c^n],
	\cdots,
	a_{U,L}[t_c^n]
	\right\}.
\end{equation}

Given the high-level reference values \(\bar{\xi}_u[t_c]\) and
\(\bar{\lambda}_u[t_c]\), the actual slot-level time allocation factor and
offloading ratio are reconstructed as $\xi_u[t_c^n]
	=
	\Pi_{[0,1]}
	\left(
	\bar{\xi}_u[t_c]+\Delta\xi_u[t_c^n]
	\right),$
$
	\lambda_u[t_c^n]
	=
	\Pi_{[\lambda_u^{\min}[t_c^n],1]}
	\left(
	\bar{\lambda}_u[t_c]+\Delta\lambda_u[t_c^n]
	\right).
$


\subsubsection{Reward}According to the Lyapunov-guided objective in (P2), we define a
shared cooperative reward for all low-level agents as
\begin{equation}
	r_L[t_c^n]
	=
	-\mathcal{G}[t_c^n]
	-
	\mathcal P_L[t_c^n],
\end{equation}
where \(\mathcal P_L[t_c^n]\) denotes the penalty for violating
small-timescale constraints, including the transmit power constraints, local
computing constraints, sensing-rate constraints, classification-error
constraints, and USV computing resource constraints. By using a shared reward,
all low-level agents are encouraged to cooperatively reduce the system energy
consumption while maintaining queue stability.

\subsubsection{Neural Networks}Each low-level agent \(i\in\mathcal I_L\) maintains a stochastic
actor network
\(\pi_{\theta_{L,i}}(a_{i,L}[t_c^n]|o_{i,L}[t_c^n])\), parameterized by
\(\theta_{L,i}\), for generating slot-level actions from local observations.
Meanwhile, two centralized soft Q-networks
\(Q_{L,i,1}(S_L[t_c^n],A_L[t_c^n];\psi_{L,i,1})\) and
\(Q_{L,i,2}(S_L[t_c^n],A_L[t_c^n];\psi_{L,i,2})\) are employed to evaluate the
global state-action value and mitigate overestimation bias. During training,
all agents collect their local experience tuple
to form the global transition
$\left\langle
	S_L[t_c^n],
	A_L[t_c^n],
	r_L[t_c^n],
	S_L[\mathrm{next}(t_c^n)]
	\right\rangle,$
which is stored in the low-level replay buffer \(\mathcal D_L\).

For a mini-batch sampled from \(\mathcal D_L\), the target value of the
\(i\)-th low-level agent is given by
\begin{equation}
	\begin{aligned}
		y_{L,i}
		=
		r_L[t_c^n]
		+
		\gamma_L
		\Big[
		&\min_{\ell=1,2}
		Q_{L,i,\ell}^{-}
		\left(S_L',A_L';\psi_{L,i,\ell}^{-}\right)\\
		&-
		\eta_L
		\sum_{j\in\mathcal I_L}
		\log
		\pi_{\theta_{L,j}}
		\left(a_{j,L}'|o_{j,L}'\right)
		\Big],
	\end{aligned}
\end{equation}
where \(S_L'=S_L[\mathrm{next}(t_c^n)]\),
\(A_L'=\{a_{j,L}'\}_{j\in\mathcal I_L}\),
\(a_{j,L}'\sim\pi_{\theta_{L,j}}(\cdot|o_{j,L}')\),
\(Q_{L,i,\ell}^{-}\) is the target Q-network, \(\gamma_L\) is the low-level
discount factor, and \(\eta_L\) is the entropy temperature.

Then, the low-level Q-networks are updated by minimizing the
temporal-difference error between the estimated Q-value and the target value,
which is given by
\begin{equation}
	\resizebox{1\hsize}{!}{$\begin{aligned}
		J_Q^{L,i}(\psi_{L,i,\ell})
		=
		\mathbb E_{\mathcal D_L}
		\left[
		\left(
		Q_{L,i,\ell}(S_L[t_c^n],A_L[t_c^n];\psi_{L,i,\ell})
		-
		y_{L,i}
		\right)^2
		\right],
		\end{aligned}$}
\end{equation}
where $\ell\in\{1,2\}$.

The low-level policy network of agent \(i\) is updated by minimizing the
entropy-regularized policy objective, which is given by
\begin{equation}
	\begin{aligned}
		J_{\pi}^{L,i}(\theta_{L,i})
		=
		\mathbb E_{\mathcal D_L}
		\Big[
		&\eta_L
		\log
		\pi_{\theta_{L,i}}
		\left(a_{i,L}[t_c^n]|o_{i,L}[t_c^n]\right)\\
		&-
		\min_{\ell=1,2}
		Q_{L,i,\ell}
		\left(S_L[t_c^n],A_L[t_c^n]\right)
		\Big].
	\end{aligned}
\end{equation}\\

Finally, the target Q-networks are updated by the soft update rule as follows
\begin{equation}
	\psi_{L,i,\ell}^{-}
	\leftarrow
	\tau_L\psi_{L,i,\ell}
	+
	(1-\tau_L)\psi_{L,i,\ell}^{-},
	\quad i\in\mathcal I_L,\ \ell\in\{1,2\},
\end{equation}
where \(\tau_L\in[0,1]\) is the soft update coefficient.

Based on the proposed LTH-SAC framework, the interaction between
the large-timescale and small-timescale policies is established through
hierarchical decision-making and information exchange, as shown in
Algorithm~\ref{alg:LTHSAC}. At each frame, the high-level hybrid SAC observes the
frame-level state and generates feasible association, waypoint, and
reference decisions. Guided by these decisions, the low-level MASAC
agents perform slot-level resource allocation based on local observations.
The environment then returns rewards and updates system states.
Slot-level transitions are stored in the low-level replay buffer for
MASAC training, while the accumulated frame-level reward is used to
update the high-level SAC after completing all slots within a frame.

\begin{algorithm}[t]
	\caption{Proposed LTH-SAC training framework}
	\label{alg:LTHSAC}
	\small
	\begin{algorithmic}[1]
		\State Initialize high-level SAC networks and low-level SAC networks;
		\State Initialize target networks and replay buffers $\mathcal{D}_H$ and $\mathcal{D}_L$;
		
		\For{episode $=1,2,\cdots$}
		\State Initialize UAV positions, queue states, channel states, target \Statex \hspace{\algorithmicindent}states, and USV roll states;
		
		\For{frame $c=1,2,\cdots,C$}
		\State Observe high-level state $s_H[t_c]$;
		\State Select high-level action $a_H[t_c]$ according to
		\Statex \hspace{3em}$\pi_H(\cdot|s_H[t_c])$;
		\State Evaluate high-level penalty $P_H[t_c]$;
		\State Correct $a_H[t_c]$ to obtain feasible UAV-target association, \Statex \hspace{3em}next UAV waypoint, and reference values $\bar{\xi}_u[t_c]$, $\bar{\lambda}_u[t_c]$;
		\State Set accumulated high-level reward $r_H[t_c]\leftarrow 0$;
		
		\For{slot $n=1,2,\cdots,N$}
		\State Observe low-level state $S_L[t_c^n]$ under the guidance of \Statex \hspace{4.5em}$a_H[t_c]$;
		\State For each low-level agent $i\in\mathcal I_L$, select 
		$a_{i,L}[t_c^n]$ \Statex \hspace{4.5em}according to 
		$\pi_{\theta_{L,i}}(\cdot|o_{i,L}[t_c^n])$;
		\State Reconstruct feasible executable variables $\xi_u[t_c^n]$, \Statex \hspace{4.5em}$\lambda_u[t_c^n]$, $\mathbf{p}_u[t_c^n]$, $\mathbf{f}_u[t_c^n]$, and $f_{u,0}[t_c^n]$;
		\State Execute the action and evaluate system performance;
		\State Calculate low-level reward $r_L[t_c^n]$ and update queue \Statex \hspace{4.5em}states;
		\State Store low-level transition in $\mathcal{D}_L$;
		\State Update low-level SAC networks;
		\State Accumulate reward $r_H[t_c]\leftarrow r_H[t_c]+r_L[t_c^n]$;
		\EndFor
		
		\State $r_H[t_c]\leftarrow r_H[t_c]-P_H[t_c]$;
		\State Obtain next high-level state $s_H[t_{c+1}]$;
		\State Store high-level transition in $\mathcal{D}_H$;
		\State Update high-level SAC networks;
		\EndFor
		\EndFor
	\end{algorithmic}
\end{algorithm}

\section{Simulation Results}
\label{sec:results}
We consider a maritime ISCC system consisting of one USV,
$U=6$ UAVs, and $K=10$ stationary maritime targets within a
$1,000~\mathrm{m}\times1,000~\mathrm{m}$ area. UAVs fly at a fixed altitude with a maximum speed of
$v_{\rm max}=20~\mathrm{m/s}$ and a minimum safety distance of
$d_{\rm min}=25~\mathrm{m}$. 
Each UAV is equipped with four transmit antennas and four receive antennas.
Each episode contains 40 large-timescale frames, each consisting of five
time slots with a duration of $1~\mathrm{s}$.
The carrier frequency $f_c$ and communication bandwidth $B_u$ are set to
$5.8~\mathrm{GHz}$ and $1~\mathrm{MHz}$, respectively.
Both the maximum sensing power and communication power of each UAV are
$2~\mathrm{W}$.
The UAV and USV computing capabilities are set to $1~\mathrm{GHz}$ and $5~\mathrm{GHz}$, respectively.
To characterize the maritime environment, the USV roll motion is
generated using 32 irregular-wave components with a significant wave
height of $0.7~\mathrm{m}$.
The remaining key simulation parameters are summarized in
Table~\ref{tab:simulation_parameters} \cite{Zhang2019UAVMEC}.

For the proposed LTH-SAC algorithm, both the high-level and low-level
Actor and Critic adopt two-layer fully connected networks, with each
hidden layer containing 512 neurons.
The high-level and low-level discount factors are set to 0.98 and 0.95,
respectively.
The replay buffer capacity and batch size are
$3\times10^{5}$ and 256, respectively.
The proposed LTH-SAC algorithm is compared with SAC, deep deterministic policy gradient (DDPG), and the greedy
algorithm \cite{Xu2022MultiUAV}.
\begin{table}[t]
	\caption{Simulation Parameters}
	\label{tab:simulation_parameters}
	\centering
	\renewcommand{\arraystretch}{1.15}
	\small
	\begin{tabular}{p{0.68\columnwidth}p{0.22\columnwidth}}
		\hline
		\textbf{Parameter} & \textbf{Value} \\
		\hline
		USV antenna height $h_r$
		& $5\,\mathrm{m}$ \\
		
		UAV height $H_u$
		& $100\,\mathrm{m}$\\
		
		Normalized sea-clutter variance $\sigma_{u,z}^2$
		& 0.02 \\
		
		Normalized radar-noise variance $\sigma_{u,r}^2$
		& $10^{-3}$ \\
		
		Communication noise power $\sigma_{u,c}^2$
		& $2\times10^{-14}$\\
		
		Computation energy coefficient $\kappa_u,\kappa_0$
		& $10^{-27}$ \\
		
		Data size per sensing sample $D_s$
		& 10 bits/sample \\
		
		CPU cycles required per bit $(l_u,l_0)$
		& (500, 500)\\
		
		
		Natural angular frequency $\omega_{\Delta\phi}$
		& 1 \rm{rad/s}\\
		
		Damping ratio $\zeta_{\Delta\phi}$
		& 0.12 \\
		
		
		
		
		Mean induced velocity $v_{0}$
		& 4.03 \rm{m/s} \\
		
		Fuselage drag ratio $d_{0}$
		& 0.6\\
		
		Rotor solidity $s$
		& 0.05 \\
		
		Air density $\rho_0$
		& 1.225 $\rm {kg/m^{3}}$\\
		
		Rotor-disc area $A$
		& 0.503 $\mathrm{m}^{2}$ \\
		
		Target-network update factors $(\tau_H,\tau_L)$
		& (0.002, 0.005) \\
		\hline
	\end{tabular}
\end{table}

Fig.~\ref{fig:reward_convergence} evaluates the reward convergence of the algorithms. It can be observed that LTH-SAC gradually improves average reward during early exploration and converges after about 100 episodes to a higher average reward than the benchmarks, with a relatively narrow variance band. This indicates stable training and can be attributed to the hierarchical two-timescale architecture, which effectively coordinates long-timescale decisions with short-timescale resource allocation. Overall, LTH-SAC effectively handles the coupled dual-timescale decisions in the considered multi-UAV maritime ISCC system.

Fig.~\ref{fig:queue_stability} illustrates the queue stability index during training, where a smaller value indicates better queue stability. Although all learning-based algorithms fluctuate during early exploration due to policy exploration and dynamic resource adjustment, the queue stability index of LTH-SAC decreases rapidly and remains at a relatively lower level after convergence than the benchmarks. Fig.~\ref{fig:queue_backlog} presents the queue backlog evolution of UAVs and the USV under the proposed LTH-SAC algorithm. We can observe that all the queue backlogs gradually converge to bounded values after approximately 100 episodes without persistent accumulation. These results demonstrate that LTH-SAC effectively coordinates sensing, communication offloading, and computation processing to maintain favorable system queue stability.

\begin{figure*}[t]
	\centering
	
	\subfloat[ Convergence performance of all algorithms.]
	{
		\includegraphics[width=0.3\textwidth]{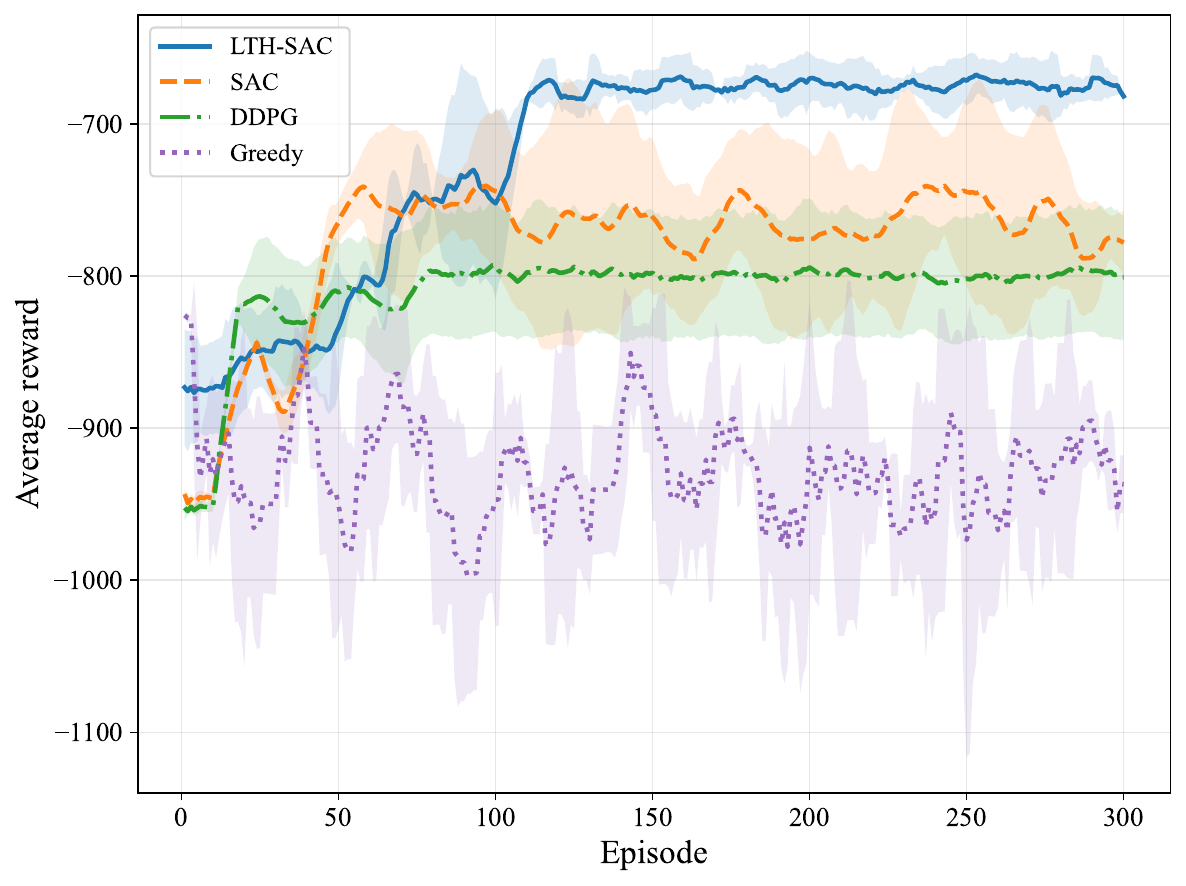}
		\label{fig:reward_convergence}
	}
	\hfill
	\subfloat[Queue stability index of all algorithms.]
	{
		\includegraphics[width=0.3\textwidth]{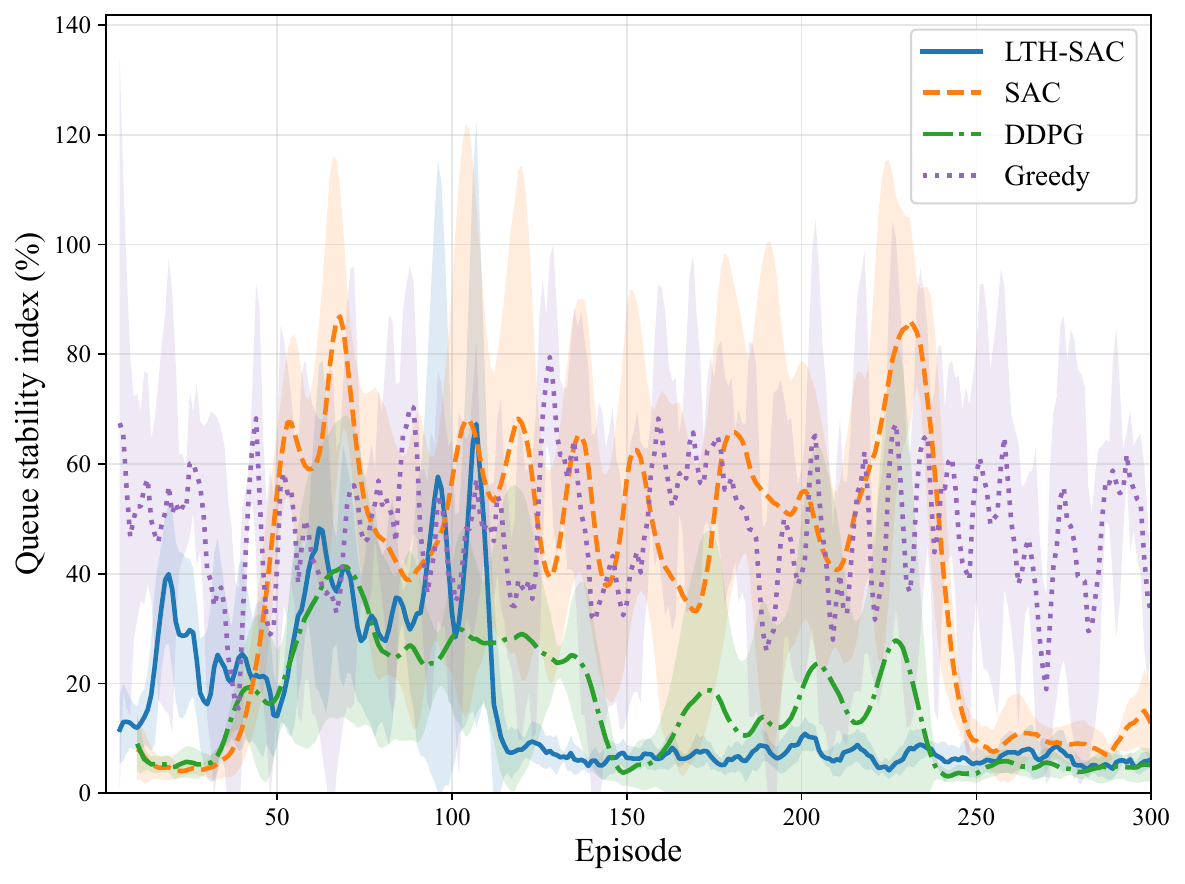}
		\label{fig:queue_stability}
	}
	\hfill
	\subfloat[Queue stability of UAVs and USV.]
	{
		\includegraphics[width=0.3\textwidth]{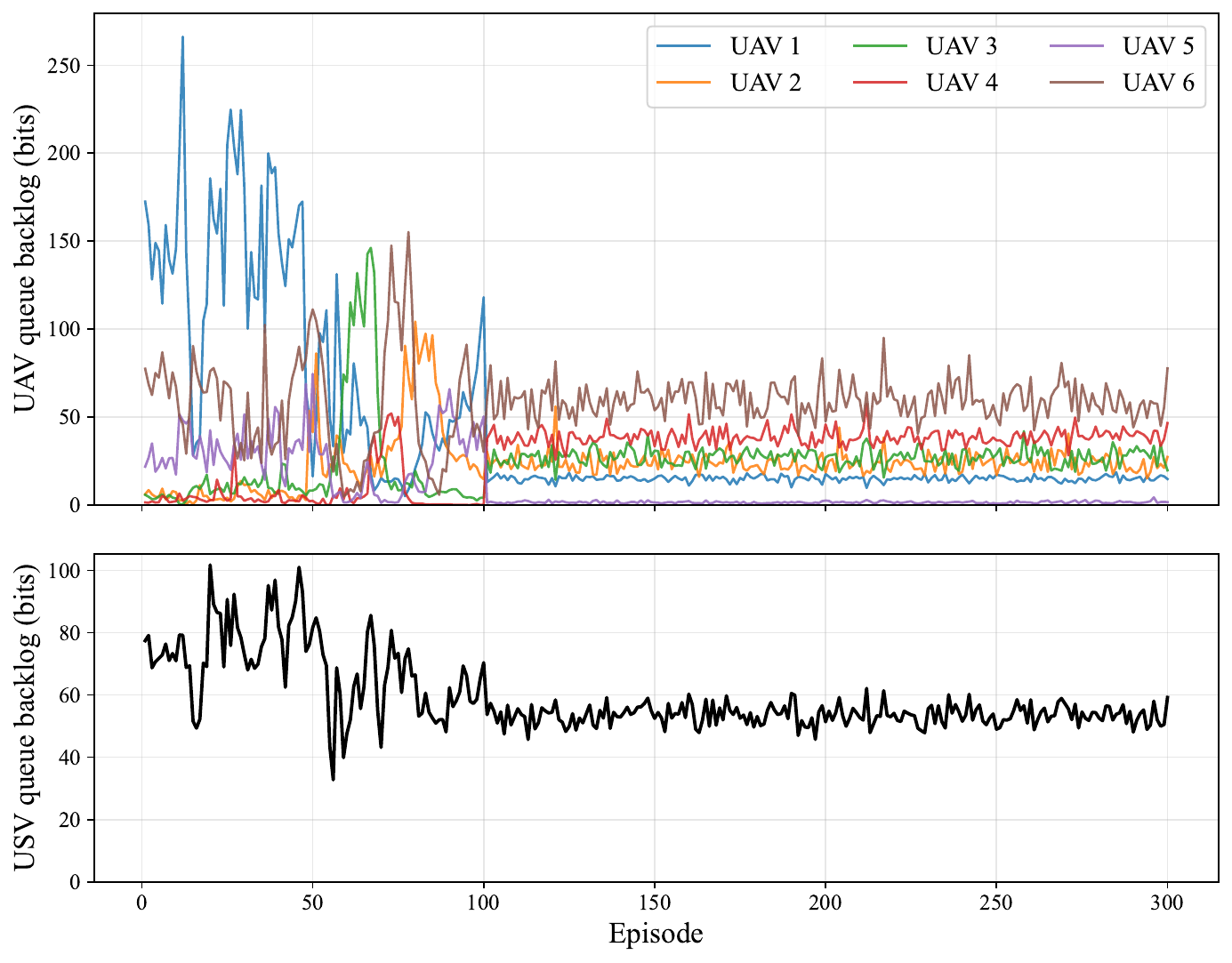}
		\label{fig:queue_backlog}
	}
	
	\caption{Convergence and queue stability performance of different algorithms.}
	\label{fig:convergence_queue}
\end{figure*}


\begin{figure*}[t]
	\centering
	
	\subfloat[Impact of $\xi_u$ on performance trade-off.]
	{
		\includegraphics[width=0.3\textwidth, trim=0pt 0pt 0pt 0pt, clip]{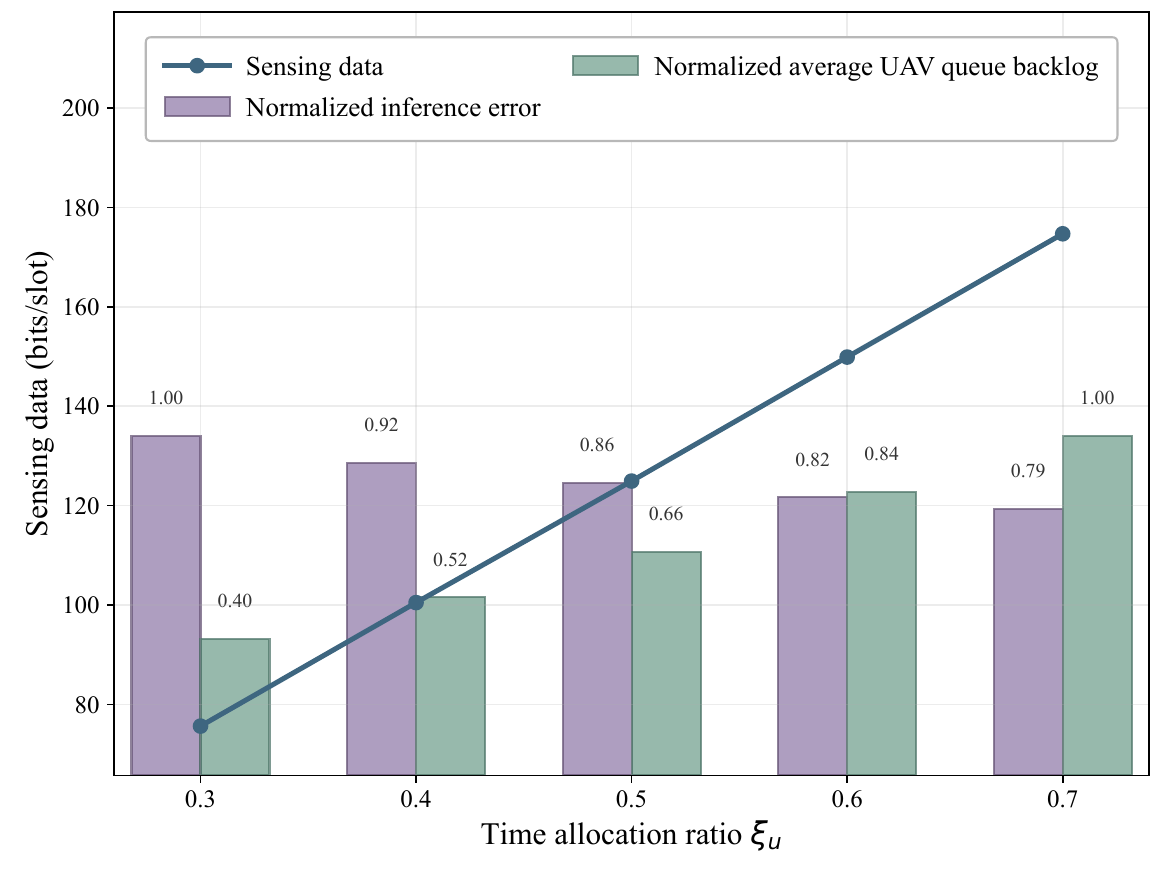}
		\label{fig:xi}
	}
	\hfill
	\subfloat[Impact of $\lambda_u$ on computing energy.]
	{
		\includegraphics[width=0.3\textwidth, trim=0pt 0pt 0pt 0pt, clip]{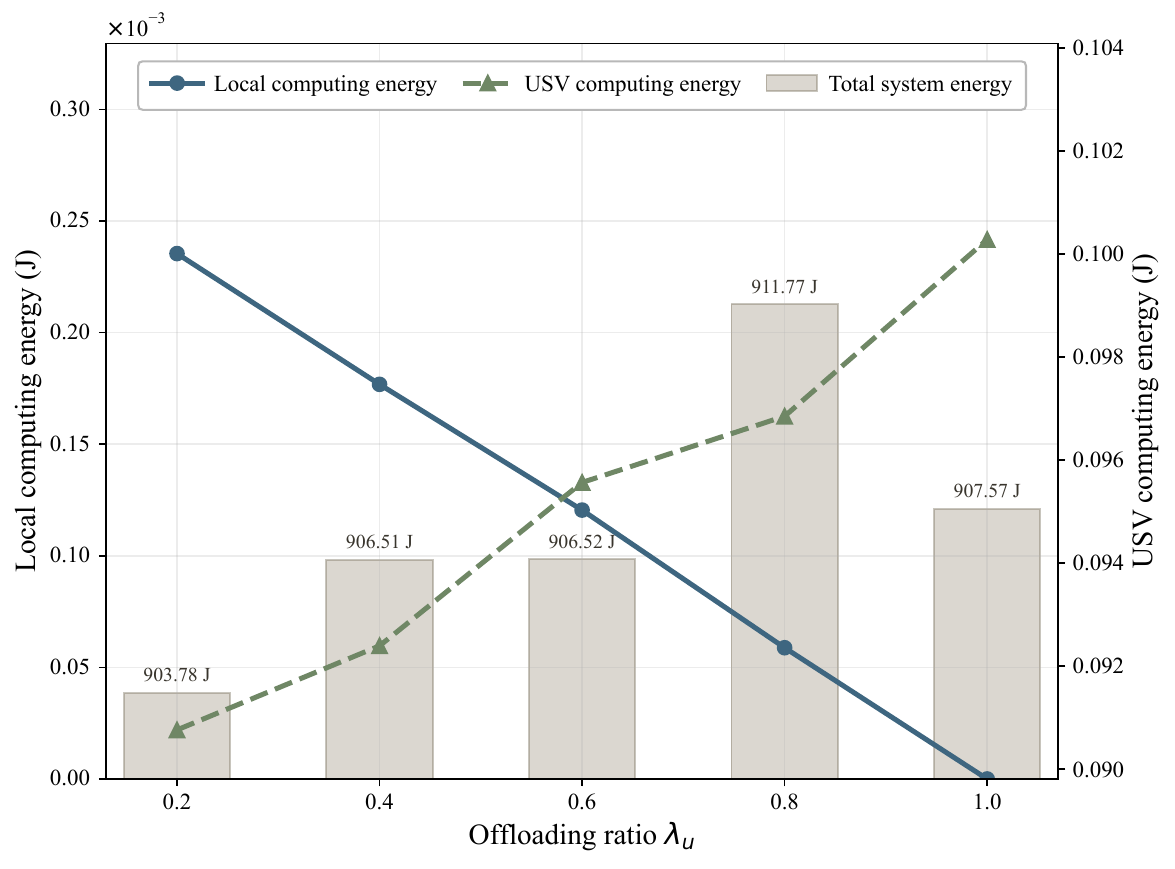}
		\label{fig:lambda_energy_tradeoff}
	}
	\hfill
	\subfloat[Impact of $\lambda_u$ on queue backlog.]
	{
		\includegraphics[width=0.3\textwidth, trim=0pt 0pt 0pt 0pt, clip]{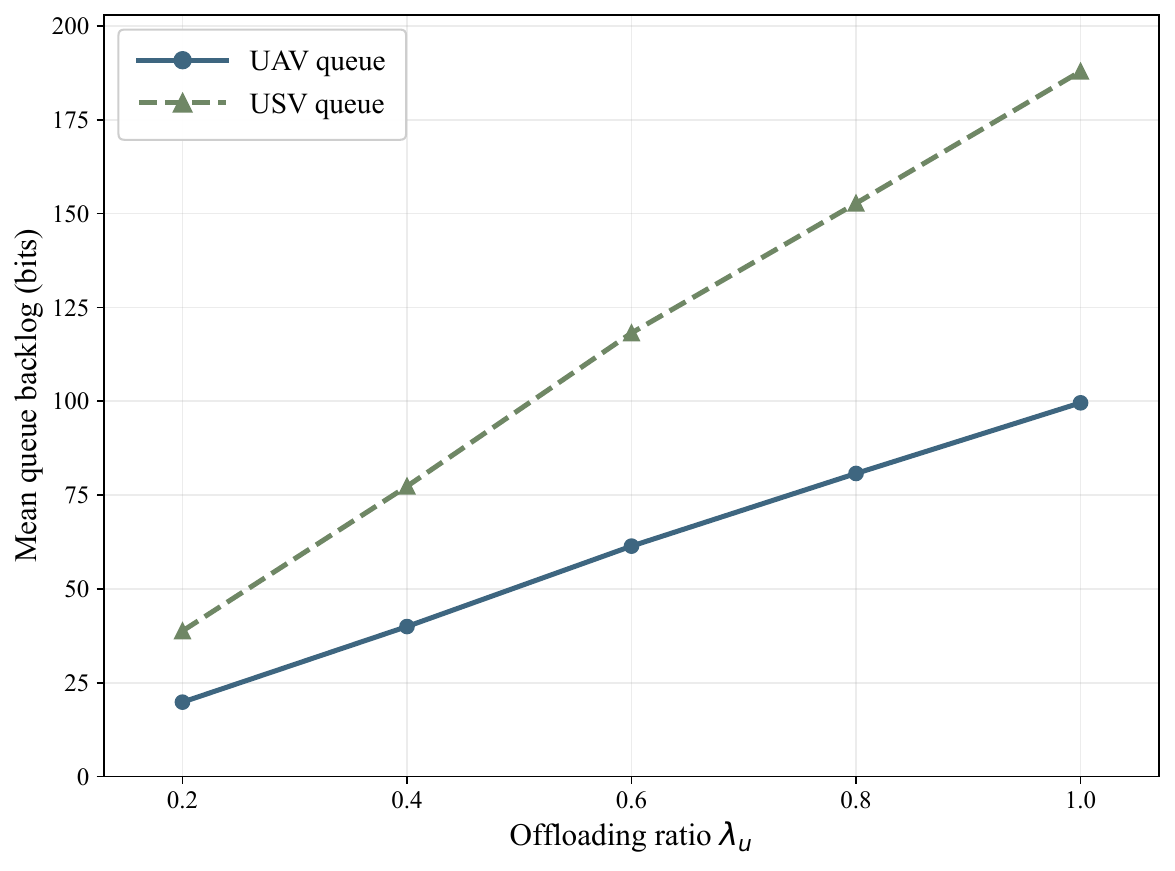}
		\label{fig:lambda_queue_tradeoff}
	}
	
	\caption{Performance trade-off analysis under different system parameters.}
	\label{fig:tradeoff}
	
	\vspace{-2mm}
\end{figure*}
\begin{figure}[t]
	\centering
	\includegraphics[width=0.35\textwidth, trim=0pt 0pt 0pt 0pt, clip]{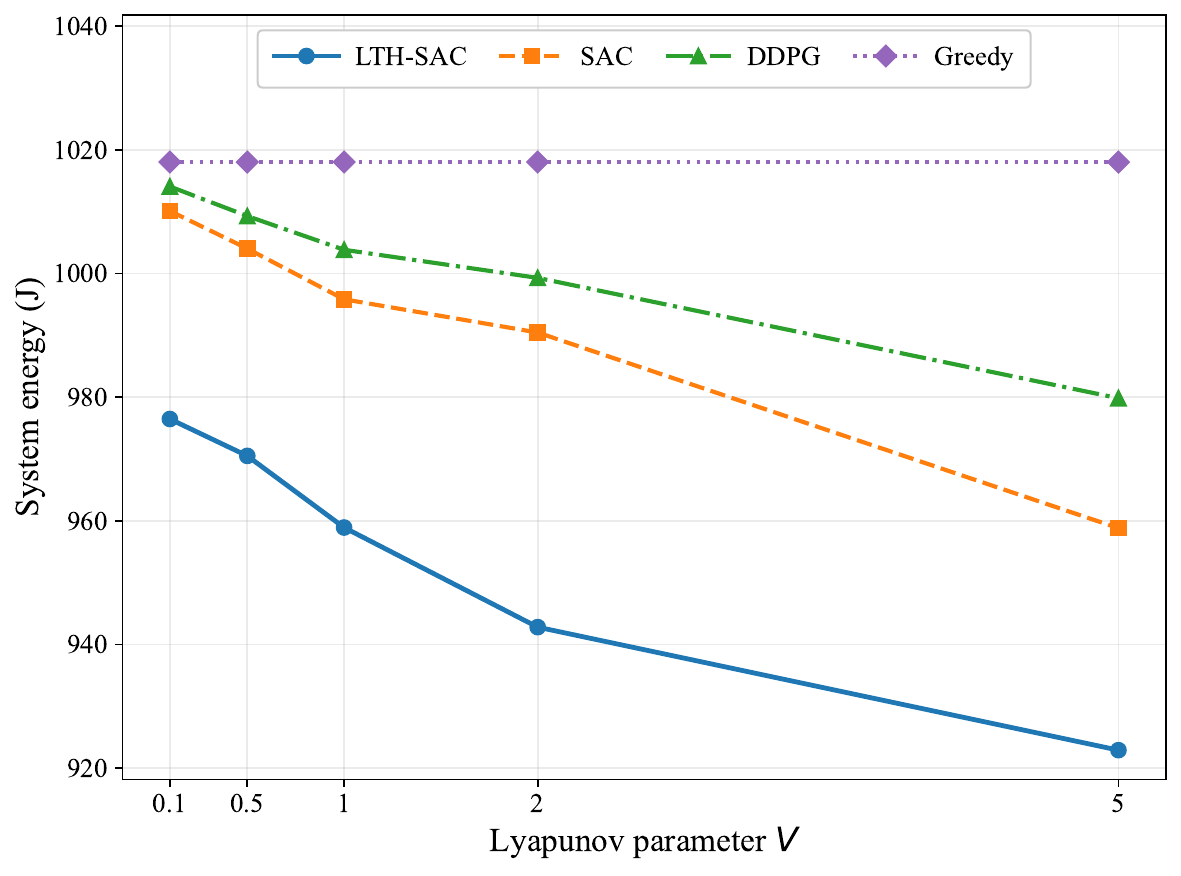}
	\caption{\small System energy under different Lyapunov control parameters.}\label{fig:V}
	\vspace{-1mm}
\end{figure}
\begin{figure}[t]
	\centering
	\subfloat[Different numbers of sensing targets.]
	{
		\includegraphics[width=0.8\linewidth]{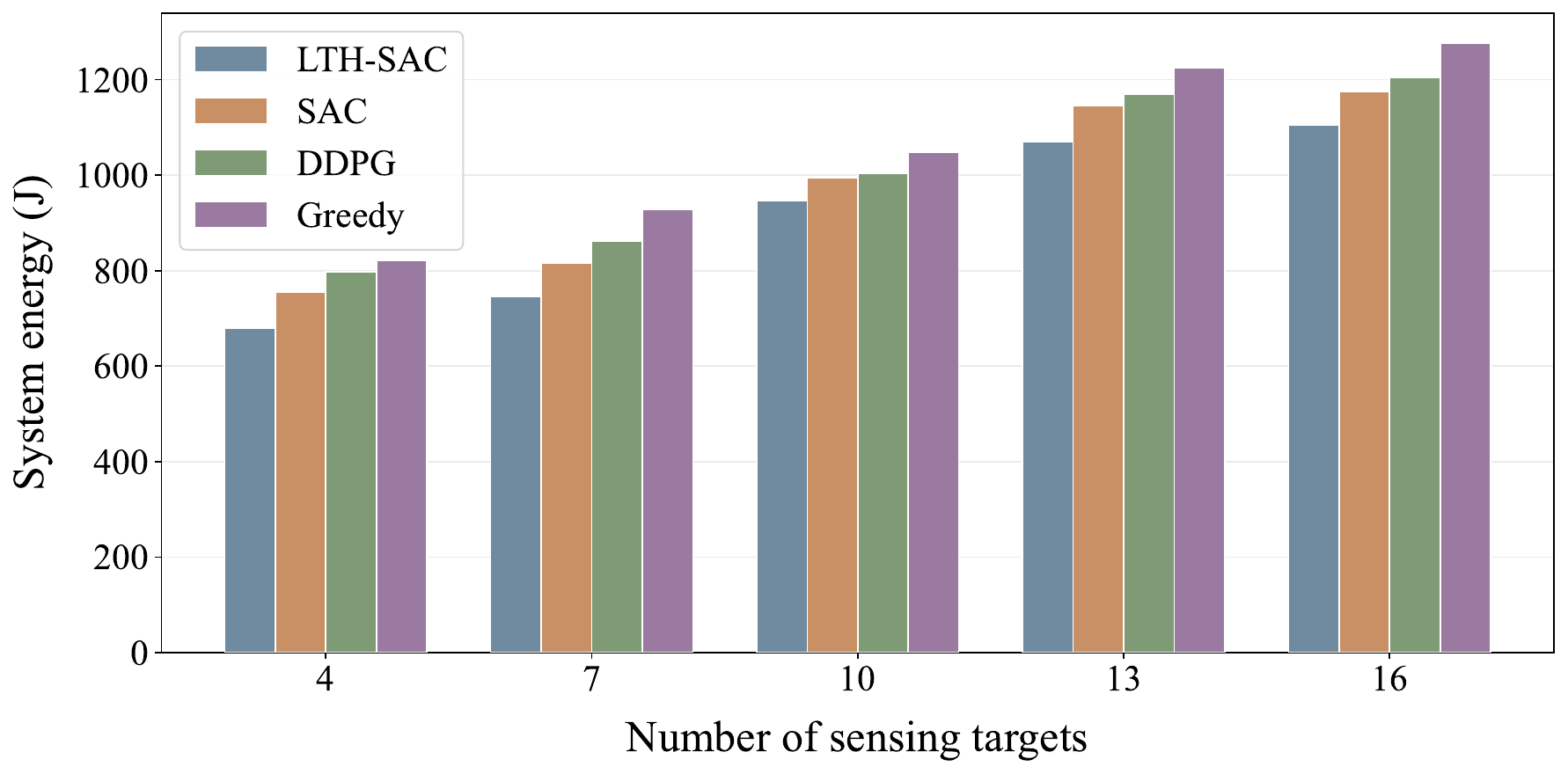}
		\label{fig:energy_targets}
	} 
	\par
	\subfloat[Different numbers of UAVs.]
	{
		\includegraphics[width=0.8\linewidth]{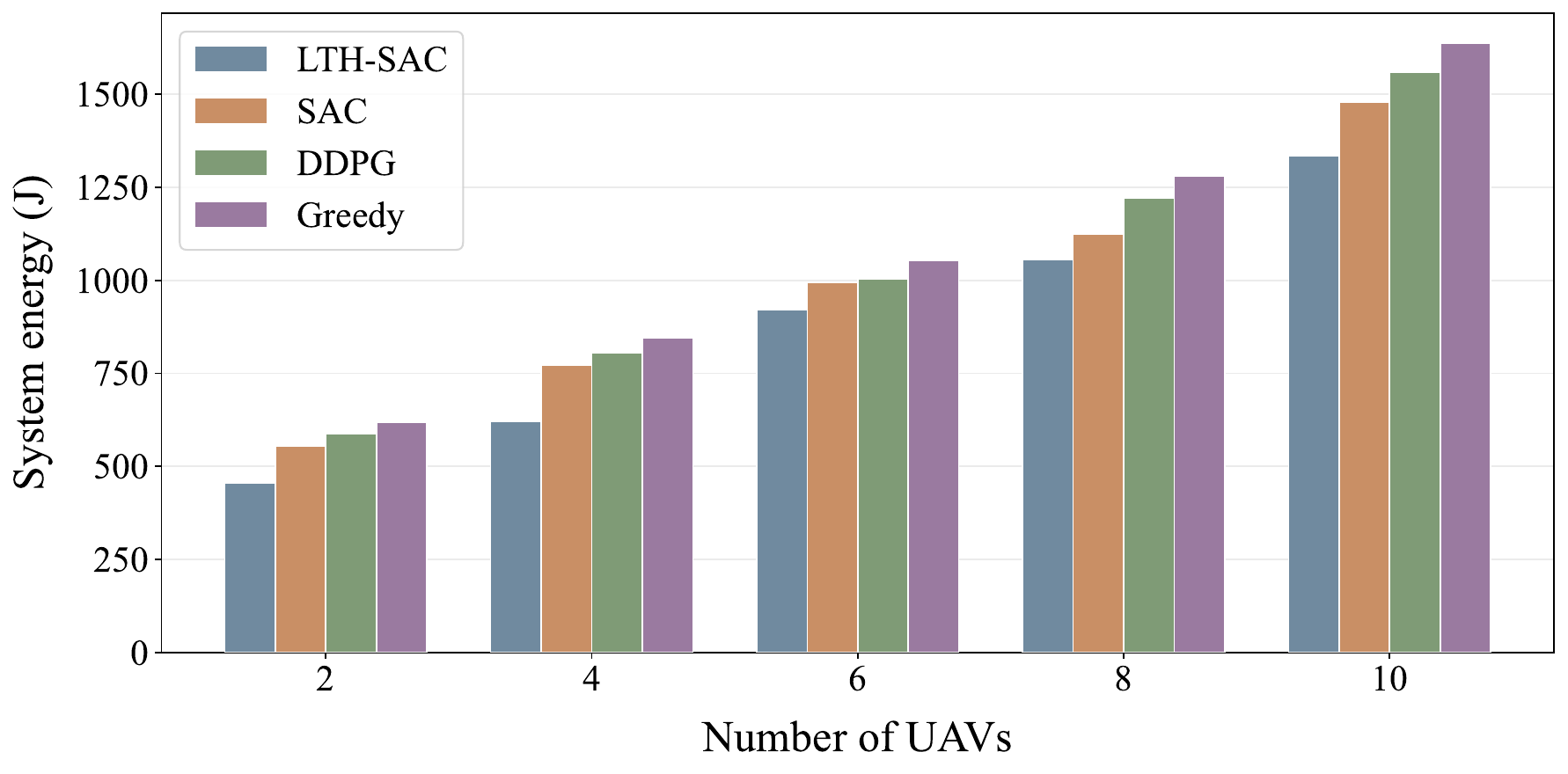}
		\label{fig:energy_uavs}
	}
	\caption{System energy under different network scales.}
	\label{fig:scalability}
	\vspace{-2mm}
\end{figure}

Fig.~\ref{fig:xi} investigates the impact of the average time allocation ratio \(\xi_u\) on sensing data generation, inference accuracy, and UAV queue backlog. As \(\xi_u\) increases, the amount of sensing data increases and the normalized inference error decreases, while the normalized UAV queue backlog grows. This is because allocating more time to sensing generates more data while reducing the time available for communication, thereby increasing queueing pressure. These results highlight the inherent trade-off between sensing performance and queue stability under limited sensing and communication resources.
Fig.~\ref{fig:lambda_energy_tradeoff} and Fig.~\ref{fig:lambda_queue_tradeoff} further examine the impact of the computation offloading ratio \(\lambda_u\) on energy consumption and queue backlogs. As \(\lambda_u\) increases, more sensed data are offloaded to the USV, reducing UAV local-computing energy while increasing USV computing energy. The total system energy does not vary monotonically with \(\lambda_u\), indicating that excessive offloading does not necessarily minimize the overall energy consumption due to the redistribution of computation and communication loads. Moreover, both the UAV-side and USV-side queue backlogs increase with \(\lambda_u\), showing that aggressive offloading may transfer computation burden to the communication and the USV processing. Therefore, an intermediate offloading level is generally required to balance UAV energy savings, USV computing cost, and queue stability.

Fig.~\ref{fig:V} illustrates the impact of the Lyapunov control parameter \(V\) on the average system energy consumption of different algorithms. As \(V\) increases, the energy consumption of LTH-SAC, SAC, and DDPG decreases, whereas that of the greedy algorithm remains almost unchanged. This trend is consistent with the drift-plus-penalty principle, since a larger $V$ assigns greater importance to energy minimization in the resource-allocation objective. Moreover, LTH-SAC consistently achieves the lowest energy consumption over the entire range of $V$, demonstrating that the proposed hierarchical policy can more effectively coordinate target association, UAV trajectory, time allocation, computation offloading, and communication resources. 

Fig.~\ref{fig:scalability} evaluates the scalability of different algorithms under varying network scales. As shown in Fig.~\ref{fig:energy_targets}, the total energy consumption generally increases with the number of sensing targets because the UAVs must undertake heavier sensing, communication, and computation workloads. Similarly, Fig.~\ref{fig:energy_uavs} shows that increasing the number of UAVs results in higher propulsion, sensing, computation, and transmission energy consumption. Nevertheless, LTH-SAC consistently achieves the lowest energy consumption in both scenarios. These results demonstrate that LTH-SAC can effectively coordinate UAV trajectories and resource allocation while maintaining superior energy efficiency under different sensing workloads and network scales, thereby validating its adaptability and scalability.

\begin{figure*}[t]
	\centering
	
	\subfloat[Target distribution I.]
	{
		\includegraphics[width=0.3\textwidth, trim=0pt 0pt 0pt 0pt, clip]{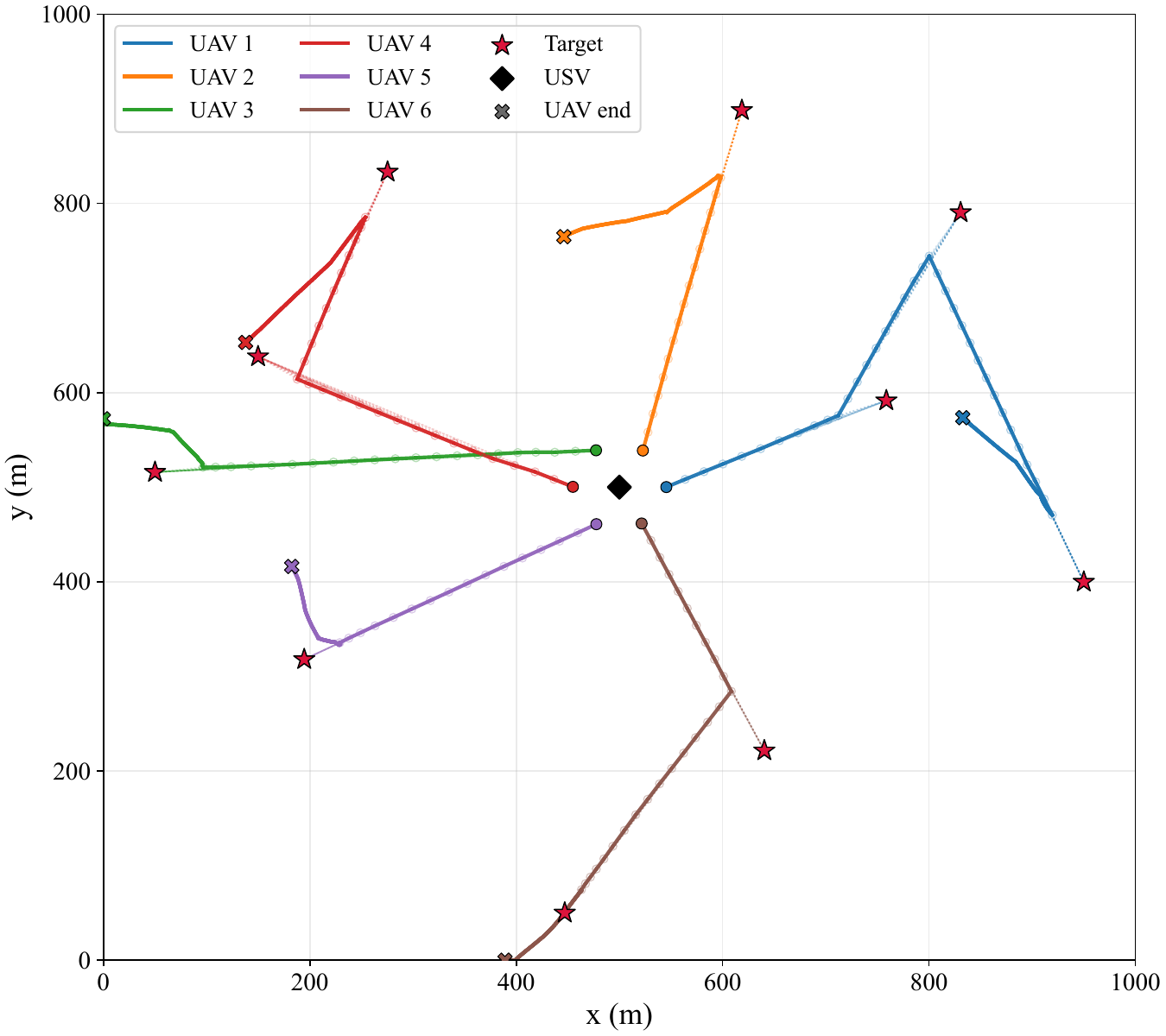}
		\label{fig:trajectory_a}
	}
	\hfill
	\subfloat[Target distribution II.]
	{
		\includegraphics[width=0.3\textwidth, trim=0pt 0pt 0pt 0pt, clip]{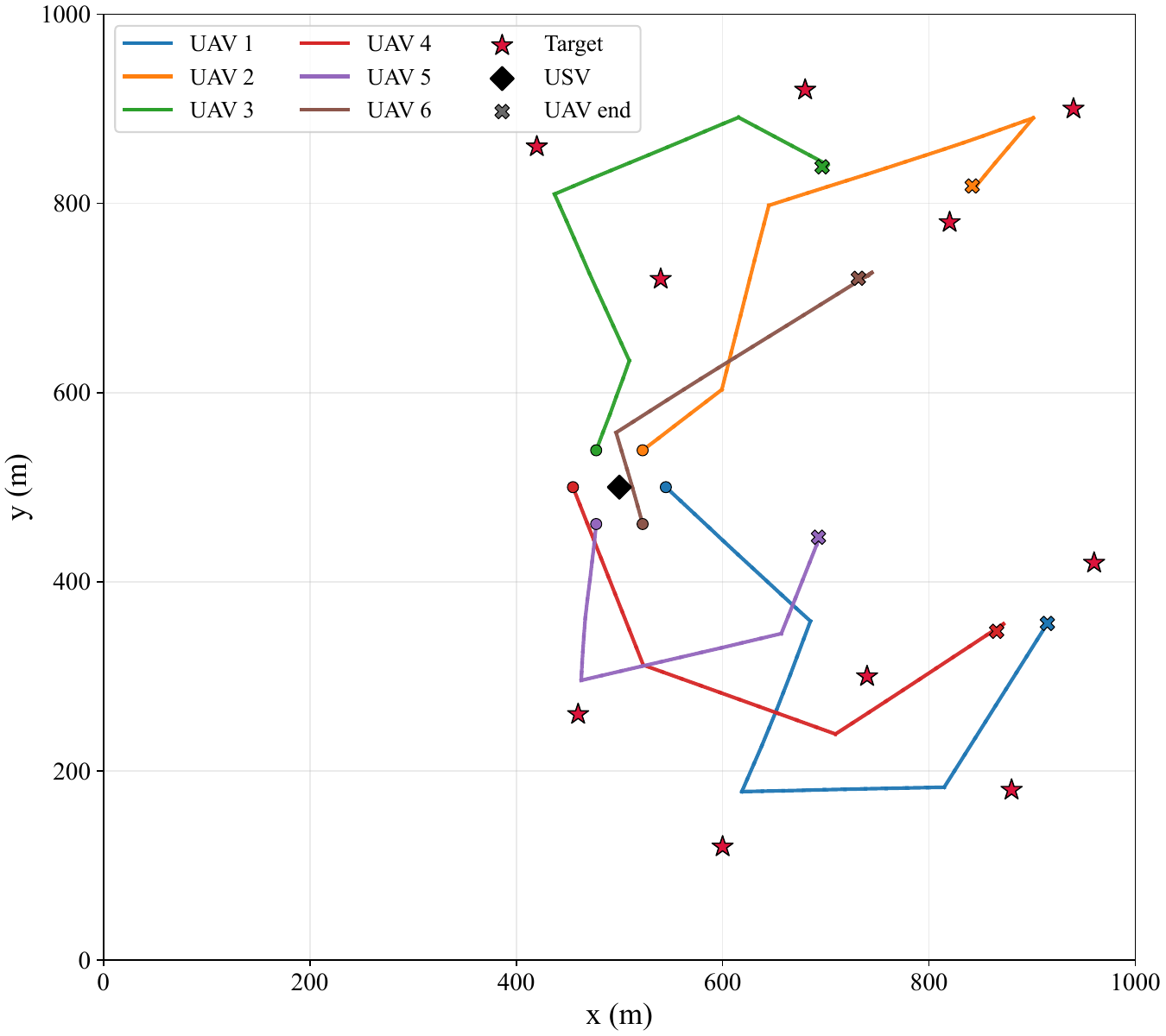}
		\label{fig:trajectory_b}
	}
	\hfill
	\subfloat[Target distribution III.]
	{
		\includegraphics[width=0.3\textwidth, trim=0pt 0pt 0pt 0pt, clip]{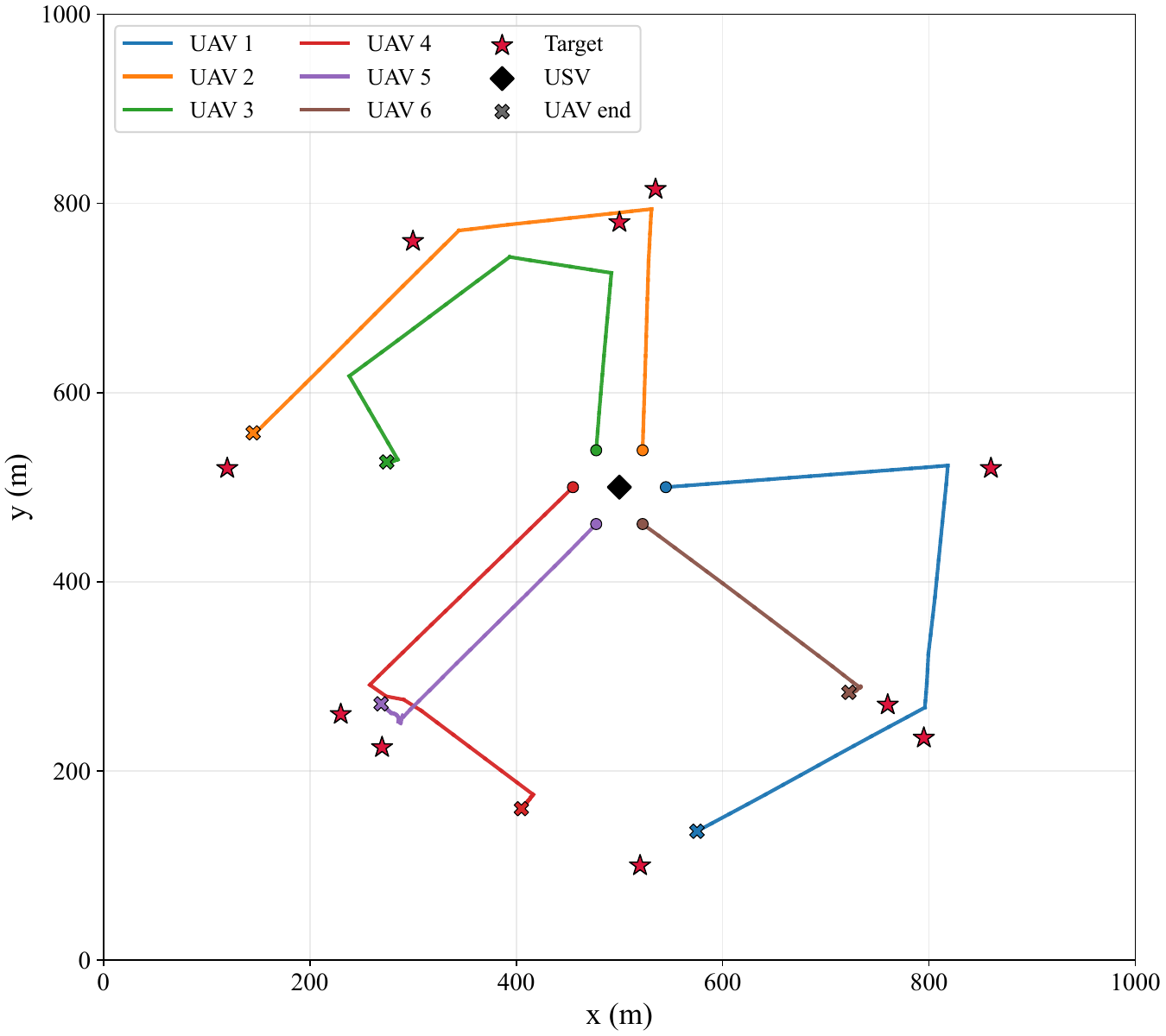}
		\label{fig:trajectory_c}
	}
	
	\caption{Optimized UAV trajectories under different target spatial distributions.}
	\label{fig:trajectory}
	
	\vspace{-2mm}
\end{figure*}

Fig.~\ref{fig:trajectory} shows the optimized UAV trajectories under different target spatial distributions. The proposed LTH-SAC adaptively adjusts UAV-target association and trajectory planning according to target locations, directing different UAVs toward different sensing regions while avoiding unnecessary wide-area exploration. As the target distribution changes, the UAV trajectories are reconfigured accordingly, demonstrating the adaptability of the high-level policy to heterogeneous maritime sensing scenarios.

\section{Conclusion}
\label{sec:conclusion}
This paper investigated the joint UAV trajectory planning and resource allocation in a maritime ISCC system, considering maritime environmental effects and coupled UAV–USV queue dynamics. We formulated a dual-timescale stochastic optimization problem and proposed an LTH-SAC framework to minimize long-term energy consumption while maintaining system performance and queue stability. Simulation results demonstrated that LTH-SAC achieves lower energy consumption and better queue stability than benchmark schemes under different network scales. Moreover, allocating more time to sensing improves sensing performance but reduces communication opportunities and increases UAV queue backlogs, while greater computation offloading reduces UAV local computing energy at the expense of increased USV processing load and queueing pressure. These findings highlight the effectiveness of queue-aware resource management in balancing sensing, communication, computation, and system stability in maritime ISCC systems.


\begin{thebibliography}{1}
\bibliographystyle{IEEEtran}

\bibitem{YWang2026-1}
Y. Wang, R. Li, B. Tian, B. Picano, Y. Chen, and G. Han,
``Queue-aware deep reinforcement learning for UAV-assisted maritime
integrated sensing and communication,''
in \emph{Proc. IEEE 37th Int. Symp. Pers., Indoor Mobile Radio Commun.
	(PIMRC)}, Singapore, Sep.~2026, pp. 1--6.

\bibitem{FSAlqurashi2023}
F. S. Alqurashi, A. Trichili, N. Saeed, B. S. Ooi, and M.-S. Alouini, ``Maritime communications: A survey on enabling technologies, opportunities,
and challenges,'' \emph{IEEE Internet Things J.}, vol. 10, no. 4, pp. 3525–3547, Feb. 2023.


\bibitem{FLiu2022}
F.~Liu, Y.~Cui, C.~Masouros, J.~Xu, T.~X.~Han, Y.~C.~Eldar, and S.~Buzzi, ``Integrated sensing and communications: Toward dual-functional wireless networks for 6G and beyond,'' \textit{IEEE J. Sel. Areas Commun.}, vol.~40, no.~6, pp.~1728--1767, Jun.~2022.

\bibitem{FLiu2020}
R. Li, Z. Xiao, and Y. Zeng,
``Toward seamless sensing coverage for cellular multi-static
integrated sensing and communication,''
\emph{IEEE Trans. Wireless Commun.}, vol. 23, no. 6,
pp. 5363--5376, Jun. 2024.

\bibitem{SLiu2020}
S.~Lu \textit{et al.}, ``Integrated sensing and communications: Recent advances and ten open challenges,'' \textit{IEEE Internet Things J.}, vol.~11, no.~11, pp.~19094--19120, Jun.~2024.

\bibitem{Mach2017Mobile}
P.~Mach and Z.~Becvar,
``Mobile edge computing: A survey on architecture and computation offloading,''
\emph{IEEE Commun. Surveys Tuts.}, vol.~19, no.~3,
pp.~1628--1656, Third Quart.~2017.

\bibitem{Mao2017MEC}
Y.~Mao, C.~You, J.~Zhang, K.~Huang, and K.~B.~Letaief,
``A survey on mobile edge computing: The communication perspective,''
\emph{IEEE Commun. Surveys Tuts.}, vol.~19, no.~4,
pp.~2322--2358, Fourth Quart.~2017.

\bibitem{Lin2024Maritime}
Z.~Lin, J.~Yang, Y.~Chen, C.~Xu, and X.~Zhang,
``Maritime distributed computation offloading in space-air-ground-sea integrated networks,''
\emph{IEEE Commun. Lett.}, vol.~28, no.~7,
pp.~1614--1618, Jul.~2024.

\bibitem{Wen2025Survey}
D.~Wen, Y.~Zhou, X.~Li, Y.~Shi, K.~Huang, and K.~B.~Letaief,
``A survey on integrated sensing, communication, and computation,''
\emph{IEEE Commun. Surveys Tuts.}, vol.~27, no.~5,
pp.~3058--3098, Oct.~2025.

\bibitem{CLi2025}
C. Li, M. Dong, Y. Fu, F. Richard Yu and N. Cheng, ``Integrated sensing, communication, and computation for IoV: Challenges and opportunities,'' \emph{IEEE Commun. Surveys Tuts.}, vol. 28, pp. 1136-1168, Sep. 2025.

\bibitem{WChen2026}
W. Chen, Y. He, Z. Ye, D. Wen, and G. Yu,
``Task-oriented multimodal edge intelligence via
integrated sensing-communication-computation,''
\emph{IEEE Trans. Wireless Commun.}, vol. 25,
pp. 20071--20085, Jul. 2026.


\bibitem{Meng2024UAVEnabled}
K.~Meng, Q.~Wu, J.~Xu, W.~Chen, Z.~Feng, R.~Schober, and A.~L.~Swindlehurst,
``UAV-enabled integrated sensing and communication: Opportunities and challenges,''
\emph{IEEE Wireless Commun.}, vol.~31, no.~2, pp.~97--104,  Apr.~2024.

\bibitem{Zhou2018Computation}
M. W. Akhtar and N. Saeed,
``UAVs-enabled maritime communications: Opportunities and challenges,''
\emph{IEEE Syst. Man Cybern. Mag.}, vol. 9, no. 3,
pp. 2--8, Jul. 2023.

\bibitem{Liu2023MLTrajectory}
X. Liu, Y. Liu, and Y. Chen,
“Machine learning empowered trajectory and resource optimization for UAV-enabled integrated sensing, communication, and computation,”
\emph{IEEE Trans. Wireless Commun.},
vol. 22, no. 11, pp. 7613--7627, Nov. 2023.

\bibitem{ZWang2024}
Z. Wang, J. Du, C. Jiang, Y. Ren, and X.-P. Zhang,
``UAV-assisted target tracking and computation offloading
in USV-based MEC networks,''
\emph{IEEE Trans. Mobile Comput.}, vol. 23, no. 12,
pp. 11389--11405, Dec. 2024.

\bibitem{RZhang2026}
R. Zhang, F. Dong, and W. Wang,
``UAV-USV collaborative sensing system based on
integrated sensing and communication,''
\emph{IEEE Sensors J.}, early access, Jan. 2026.


\bibitem{CarreteroMoya2010SeaClutter}
J.~Carretero-Moya, J.~Gismero-Menoyo, A.~Blanco-del-Campo, and A.~Asensio-Lopez,
``Statistical analysis of a high-resolution sea-clutter database,''
\emph{IEEE Trans. Geosci. Remote Sens.}, vol.~48, no.~4,
pp.~2024--2037, Apr.~2010.

\bibitem{Nomikos2023Survey}
N. Nomikos, P. K. Gkonis, P. S. Bithas, and P. Trakadas,
``A survey on UAV-aided maritime communications:
Deployment considerations, applications, and future challenges,''
\emph{IEEE Open J. Commun. Soc.}, vol. 4, pp. 56--78, 2023.




\bibitem{Soltani2011ShipAntenna}
M.~N.~Soltani, R.~Izadi-Zamanabadi, and R.~Wisniewski,
``Reliable control of ship-mounted satellite tracking antenna,''
\emph{IEEE Trans. Control Syst. Technol.}, vol.~19, no.~1,
pp.~221--228, Jan.~2011.

\bibitem{PHou2025}
P. Hou et al.,
``Distributed DRL-based integrated sensing, communication,
and computation in cooperative UAV-enabled intelligent
transportation systems,''
\emph{IEEE Internet Things J.}, vol. 12, no. 5,
pp. 5792--5806, Mar.~2025.

\bibitem{Deng2024MultiUAVISCC}
C.~Deng, X.~Fang, and X.~Wang,
``Integrated sensing, communication, and computation with adaptive DNN splitting in multi-UAV networks,''
\emph{IEEE Trans. Wireless Commun.}, vol.~23, no.~11,
pp.~17429--17445, Nov.~2024.

\bibitem{Zhou2024UAVISCC}
Y. Zhou, X. Liu, X. Zhai, Q. Zhu, and T. S. Durrani,
``UAV-enabled integrated sensing, computing, and communication for Internet of Things: Joint resource allocation and trajectory design,''
IEEE Internet Things J., vol. 11, no. 7, pp. 12717--12727, Apr. 2024.

\bibitem{Chen2024UAVISCC}
J. Chen, Y. Xu, D. Yang, and T. Zhang,
``UAV-assisted ISCC networks: Joint resource and trajectory optimization,''
\emph{IEEE Wireless Commun. Lett.}, vol. 13, no. 9, pp. 2372--2376, Sep. 2024.

\bibitem{Wang2018Wireless}
J.~Wang \emph{et al.},
``Wireless channel models for maritime communications,''
\emph{IEEE Access}, vol.~6, pp.~68070--68088, Nov. 2018.

\bibitem{Zhang2025U2SChannel}
C. Zhang, M. Yang, L. Bai, B. Ai, R. He, Z. Gao, Y. Gong, and G. Shi,
``A geometry-based marine channel model for UAV-to-ship communication systems,''
\emph{IET Commun.}, vol. 19, no. 1, Art. no. e70104, Nov. 2025.

\bibitem{NAhmed2025}
N. Ahmed, F. Ali, Q. Deng, Q. Zhu, X. Chen, B. Hua, J. Bao, and K. Mao,
``Impact of sea cluttering and wave shadowing on U2S MIMO
channel model incorporating UAV-ship 6D motion in
maritime environments,''
\emph{Veh. Commun.}, vol. 55, Art. no. 100963, Oct. 2025.

\bibitem{JYu2024}
J. Yu, C. Hua, and L. Liu,
``Adaptive beamforming for unmanned surface vehicle-to-satellite
communication in harsh sea environments,''
\emph{Mobile Communications}, vol. 48, no. 11,
pp. 70--76, Nov. 2024.

\bibitem{Zhang2024MaritimeISAC}
J. Zhang, G. Wang, H. Yang, B. Liu, and B. Li,
``Multiuser maritime integrated sensing and communication
shipboard base station deployment optimization,''
\emph{IEEE Internet Things J.}, vol. 11, no. 18,
pp. 29375--29386, Jun. 2024.

\bibitem{Zhang2024MaritimeResource}
J. Zhang, G. Wang, H. Yang, B. Liu, and B. Li,
``Multi-user maritime integrated sensing and communication resource allocation in auxiliary communication scenarios,''
\emph{J. Commun. Netw.}, vol. 26, no. 5, pp. 566--576, Oct. 2024.

\bibitem{Li2025MaritimeISAC}
B. Li, J. Liu, Y. Xiong, J. Mu, P. Xiao, and S. Chen,
``UAV-enabled integrated sensing and communication in maritime emergency networks,''
\emph{IEEE Internet Things J.}, vol. 12, no. 24, pp. 53997--54012, Dec. 2025.

\bibitem{Dai2022SmartOceans}
M. Dai, Y. Li, P. Li, Y. Wu, L. Qian, B. Lin, and Z. Su,
``A survey on integrated sensing, communication, and computing networks for smart oceans,''
\emph{J. Sens. Actuator Netw.}, vol. 11, no. 4, Art. no. 70, Oct.~2022.

\bibitem{Xu2023UAVISCC}
Y. Xu, T. Zhang, Y. Liu, and D. Yang,
``UAV-enabled integrated sensing, computing, and communication: A fundamental trade-off,''
\emph{IEEE Wireless Commun. Lett.}, vol. 12, no. 5, pp. 843--847, May 2023.

\bibitem{Liu2025ISCCOffloading}
P. Liu, Z. Fei, X. Wang, Y. Zhou, Y. Zhang, and F. Liu,
``Joint beamforming and offloading design for integrated sensing, communication, and computation system,''
\emph{IEEE Trans. Veh. Technol.}, vol. 74, no. 9, pp. 14933--14937, Sep. 2025.

\bibitem{Zhang2019UAVMEC}
J.~Zhang, L.~Zhou, Q.~Tang, E.~C.-H.~Ngai, X.~Hu, H.~Zhao, and J.~Wei,
``Stochastic computation offloading and trajectory scheduling for UAV-assisted mobile edge computing,''
\emph{IEEE Internet Things J.}, vol.~6, no.~2,
pp.~3688--3699, Apr.~2019.




\bibitem{Bai2025Dynamic}
Y. Bai and Y. Zhang,
``Dynamic offloading based on Lyapunov optimization for
UAV-assisted maritime IoT-MEC networks,''
\emph{IEEE Trans. Veh. Technol.}, vol. 74, no. 11,
pp. 17894--17906, Nov.~2025.

\bibitem{Gao2026ISCCQueue}
X. Gao, X. Hu, W. Wang, K.-K. Wong, K. Yang, and C.-B. Chae,
``Integrated sensing, communications, and computation in edge-intelligent networks: An online resource management approach,'' \emph{IEEE Trans. Wireless Commun.}, vol. 25, pp. 16578--16595, May~2026.

\bibitem{Li2026MaritimeISACQueue}
B. Li, M. Ye, H. Liu, Y. Gong, N. Gao, J. Nie, P. Xiao, and X. Cheng,
``Queue-aware graph reinforcement learning for UAV-ISAC-assisted maritime data collection,''
\emph{arXiv:2607.00324}, Jul. 2026.

\bibitem{Gou2026UnderwaterISACQueue}
Y. Gou, T. Zhang, J. Liu, Y. Bi, and G. Han,
``Providing efficient communication for underwater acoustic sensor networks with ISAC: A deep multi-agent reinforcement learning approach,''
\emph{IEEE Trans. Mobile Comput.}, early access, Sep. 2026.







\bibitem{chen2009mimo}
C. Y. Chen and P. P. Vaidyanathan, ``MIMO radar waveform optimization with prior information of the extended target and clutter,'' \emph{IEEE Trans. Signal Process.}, vol. 57, no. 9, pp. 3533--3544, Sep. 2009.


\bibitem{Lorenz2005MVDR}
R.~G.~Lorenz and S.~P.~Boyd,
``Robust minimum variance beamforming,''
\emph{IEEE Trans. Signal Process.}, vol.~53, no.~5,
pp.~1684--1696, May~2005.

\bibitem{SPeng2025}
S. Peng, B. Li, L. Liu, Z. Fei, and D. Niyato, ``Trajectory design and resource allocation for multi-UAV-assisted sensing, communication, and edge computing integration,'' \emph{IEEE Trans. Commun.}, vol. 73, no. 4, pp. 2847–2861, Apr. 2025.

\bibitem{BLi2025}
B. Li, B. Liu, H. Yang, G. Wang, J. Zhang, and N. Zhao,
``Waveform design of UAV-enabled integrated sensing and
communication in marine environment,''
\emph{J. Electron. Inf. Technol.}, vol. 47, no. 4,
pp. 958--967, Apr. 2025.

\bibitem{CZeng2021}
C. Zeng, J.-B. Wang, C. Ding, H. Zhang, M. Lin, and J. Cheng,
``Joint optimization of trajectory and communication resource
allocation for unmanned surface vehicle enabled maritime
wireless networks,''
\emph{IEEE Trans. Commun.}, vol. 69, no. 12,
pp. 8100--8115, Dec. 2021.

\bibitem{Zhang2022Accelerating}
T.~Zhang, S.~Wang, G.~Li, F.~Liu, G.~Zhu, and R.~Wang,
``Accelerating edge intelligence via integrated sensing and communication,''
in \emph{Proc. IEEE Int. Conf. Commun. (ICC)},
Seoul, South Korea, May~2022, pp.~1586--1592.


\bibitem{Zeng2019UAVEnergy}
Y.~Zeng, J.~Xu, and R.~Zhang,
``Energy minimization for wireless communication with rotary-wing UAV,''
\emph{IEEE Trans. Wireless Commun.}, vol.~18, no.~4,
pp.~2329--2345, Apr.~2019.

\bibitem{Bi2021Lyapunov}
S.~Bi, L.~Huang, H.~Wang, and Y.-J.~A.~Zhang,
``Lyapunov-guided deep reinforcement learning for stable online computation offloading in mobile-edge computing networks,''
\emph{IEEE Trans. Wireless Commun.}, vol.~20, no.~11,
pp.~7519--7537, Nov.~2021.

\bibitem{Xu2022MultiUAV}
Y.~Xu, T.~Zhang, Y.~Liu, D.~Yang, L.~Xiao, and M.~Tao, ``Cellular-connected multi-UAV MEC networks: An online stochastic optimization approach,'' \emph{IEEE Trans. Commun.}, vol.~70, no.~10, pp.~6630--6647, Oct.~2022.


\end{thebibliography}
\end{document}